\newtheorem{theorem}{Theorem}
\newtheorem{lemma}[theorem]{Lemma}
\newtheorem{corollary}[theorem]{Corollary}
\newtheorem{conjecture}[theorem]{Conjecture}
\newtheorem{claim}{Claim}
\newtheorem{remark}{Remark}
\title{Recoloring graphs via tree decompositions\thanks{The authors are supported by the ANR Grant EGOS (2012-2015) 12 JS02 002 01.}}
\author{Marthe Bonamy, Nicolas Bousquet\\ \normalsize{LIRMM, Universit\'e Montpellier 2, CNRS}\\ \small{\{marthe.bonamy, nicolas.bousquet\}@lirmm.fr}}
\begin{document}

\maketitle

\begin{abstract}
Let $k$ be an integer. Two vertex $k$-colorings of a graph are \emph{adjacent} if they differ on exactly one vertex. A graph is \emph{$k$-mixing} if any proper $k$-coloring can be transformed into any other through a sequence of adjacent proper $k$-colorings. Jerrum proved that any graph is $k$-mixing if $k$ is at least the maximum degree plus two. We first improve Jerrum's bound using the grundy number, which is the worst number of colors in a greedy coloring.

Any graph is $(tw+2)$-mixing, where $tw$ is the treewidth of the graph (Cereceda 2006). We prove that the shortest sequence between any two $(tw+2)$-colorings is at most quadratic (which is optimal up to a constant factor), a problem left open in Bonamy et al. (2012).

We also prove that given any two $(\chi(G)+1)$-colorings of a cograph (resp. distance-hereditary graph) $G$, we can find a linear (resp. quadratic) sequence between them. In both cases, the bounds cannot be improved by more than a constant factor for a fixed $\chi(G)$. The graph classes are also optimal in some sense: one of the smallest interesting superclass of distance-hereditary graphs corresponds to comparability graphs, for which no such property holds (even when relaxing the constraint on the length of the sequence). As for cographs, they are equivalently the graphs with no induced $P_4$, and there exist $P_5$-free graphs that admit no sequence between two of their $(\chi(G)+1)$-colorings.

All the proofs are constructivist and lead to polynomial-time recoloring algorithms.

\emph{Keywords:} Reconfiguration problems, vertex coloring, treewidth, distance-hereditary, cograph, grundy number.
\end{abstract}


\section{Introduction}

Reconfiguration problems (see~\cite{Gopalan09,ItoD11,ItoD09} for instance) consist in finding step-by-step transformations between two feasible solutions such that all intermediate results are also feasible. Such problems model dynamic situations where a given solution is in place and has to be modified, but no property disruption can be afforded. In this paper our reference problem is vertex coloring. 

In the whole paper, $G=(V,E)$ is a graph where $n$ denotes the size of $V$ and $k$ is an integer. For standard definitions and notations on graphs, we refer the reader to~\cite{Diestel}.
A \emph{(proper) $k$-coloring} of $G$ is a function $f : V(G) \rightarrow \{ 1,\ldots,k \}$ such that, for every edge $xy$, $f(x)\neq f(y)$. The \emph{chromatic number} $\chi(G)$ of a graph $G$ is the smallest $k$ such that $G$ admits a $k$-coloring.

Two $k$-colorings are \emph{adjacent} if they differ on exactly one vertex. The \emph{$k$-recoloring graph of $G$}, denoted $R_k(G)$ and defined for any $k\geq \chi(G)$, is the graph whose vertices are $k$-colorings of $G$, with the adjacency defined above. Note that two colorings equivalent up to color permutation are distinct vertices in the recoloring graph.
The graph $G$ is \emph{$k$-mixing} if $R_k(G)$ is connected. Cereceda, van den Heuvel and Johnson characterized the $3$-mixing graphs and provided an algorithm to recognize them~\cite{Cereceda09,CerecedaHJ11}. The easiest way to prove that a graph $G$ is not $k$-mixing is to exhibit a \emph{frozen} $k$-coloring of $G$, i.e. a coloring in which all vertices are adjacent to vertices of all other colors. Such a coloring is an isolated vertex in $R_k(G)$.

Deciding whether a graph is $k$-mixing is $\mathbf{PSPACE}$-complete for $k \geq 4$~\cite{BonsmaC07}. The \emph{$k$-recoloring diameter} of a $k$-mixing graph is the diameter of $R_k(G)$. In other words, it is the minimum $D$ for which any $k$-coloring can be transformed into any other through a sequence of at most $D$ adjacent $k$-colorings. The \emph{mixing number} of $G$ is the minimum integer $m(G)$ for which $G$ is $k$-mixing for every $k \geq m(G)$. It can be arbitrarily larger than the minimum $k$ for which $G$ is $k$-mixing~\cite{Cereceda} (thus arbitrarily larger than its chromatic number). Indeed, for complete bipartite graphs minus a matching, the chromatic number equals two, any 3-coloring can be transformed into any other through a linear sequence of 3-colorings if there are at least four vertices on each side of the bipartition, and the mixing number is arbitrarily large (see Fig.~\ref{fig:completmoinsmatching}). Throughout the paper, our goal is to determine bounds on the mixing number of graphs and prove that the corresponding recoloring diameter is polynomial. Note that there exists a family of graphs for which there are two colorings that can be transformed one into the other but not through a polynomial sequence~\cite{BonsmaC07}.

\begin{figure}
\centering
\includegraphics[scale=0.7,height=30mm,bb=0 0 100 100]{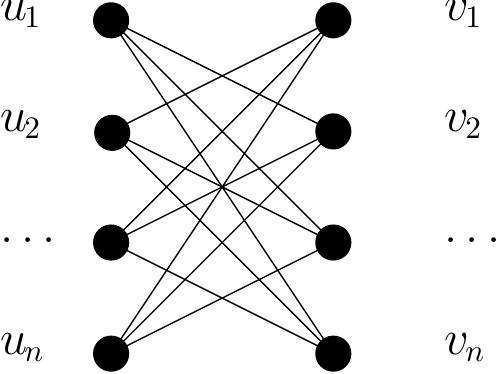}
\caption{\cite{Cereceda} A complete bipartite minus a matching. If $u_i,v_i$ are given the same color, no vertex can be recolored.}
\label{fig:completmoinsmatching}
\end{figure}

Jerrum~\cite{Jerrum95} proved that $m(G) \leq \Delta(G)+2$, where $\Delta(G)$ denotes the maximum degree of $G$. 
Let $x_1,\ldots,x_n$ be an order $\mathcal{O}$ on $V$. We denote by $N(v)$ the neighborhood of $x$. In the \emph{greedy coloring} $C(G,\mathcal{O})$ of $G$ relative to $\mathcal{O}$, every $x_i$ has the smallest color that does not appear in $N(x_i) \cap \{ x_1,\ldots,x_{i-1} \}$. Introduced in~\cite{Christen79}, the \emph{grundy number} $\chi_g(G)$ is the maximum, over all the orders $\mathcal{O}$, of the number of colors used in $C(G,\mathcal{O})$. So $\chi_g(G)$ is the worst number of colors in a greedy coloring of $G$. In Section~\ref{sect:grundy}, we prove that any graph $G$ is $(\chi_g(G)+1)$-mixing in linear time. The corresponding diameter is at most $4\cdot \chi(G)$, and we give a polynomial-time algorithm to find a sequence of length at most $2 \cdot \chi_g(G)$.

\begin{theorem}\label{thm:grundy}
For any graph $G$, if $k \geq \chi_g(G)+1$, then $G$ is $k$-mixing and the $k$-recoloring diameter is at most $4 \cdot \chi(G) \cdot n$.
\end{theorem}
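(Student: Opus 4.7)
The plan is to prove the diameter bound, which immediately implies $k$-mixing. By the triangle inequality it suffices to show that any $k$-coloring $c$ of $G$ (with $k \geq \chi_g(G)+1$) lies at distance at most $2\chi(G)\cdot n$ from a fixed canonical coloring $c^{\star}$.

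For $c^{\star}$ I would take the greedy coloring $c_g = C(G,\mathcal{O})$ with respect to an order $\mathcal{O}$ obtained by concatenating the color classes of an optimal $\chi(G)$-coloring of $G$; a direct check shows that $c_g$ uses only $\chi(G)$ colors, since a vertex of the $j$-th class has no earlier neighbor in its own class and therefore receives a greedy color at most $j$. I then would transform $c$ into $c_g$ by processing the vertices in the order $\mathcal{O} = v_1,\ldots,v_n$, maintaining the invariant that after step $i$ each of $v_1,\ldots,v_i$ has its greedy color. In step $i$, if $v_i$ is not yet at $c_g(v_i)$, I first ``evict'' every later neighbor $v_j$ ($j>i$) currently at color $c_g(v_i)$ by recoloring it to some auxiliary color, and only then recolor $v_i$ itself to $c_g(v_i)$; the defining greedy property of $\mathcal{O}$ ensures that no already-finalized earlier neighbor of $v_i$ carries color $c_g(v_i)$, so this last move is legal.

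The main obstacle is showing that each eviction is actually feasible using only $k \geq \chi_g(G)+1$ colors, since $\chi_g(G)$ has no direct relation to $\Delta(G)$ and one cannot invoke Jerrum's free-color argument. Here the grundy hypothesis enters: I would order the evictions according to a greedy coloring of the subgraph induced by the still-unfinalized vertices, so that at the moment an evicted vertex $v_j$ is recolored, the restriction of the current coloring to $v_j$'s colored neighborhood uses at most $\chi_g(G)$ distinct values. Hence at least one of the $\chi_g(G)+1$ available colors is free \emph{and} different from the evacuated color $c_g(v_i)$, and we move $v_j$ there.

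A final bookkeeping bounds the total number of recolorings: each vertex incurs at most one finalization move, and each eviction of a vertex clears one of the at most $\chi(G)$ distinct colors used by $c_g$ from its neighborhood, so each vertex is evicted at most $\chi(G)$ times in total. This yields a bound of $(1+\chi(G))\cdot n \leq 2\chi(G)\cdot n$ recolorings from $c$ to $c_g$, and doubling via the triangle inequality gives the desired $4\chi(G)\cdot n$ diameter bound.
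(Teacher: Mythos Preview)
Your eviction step contains a genuine gap. You argue that, after suitably ordering the evictions, the current coloring uses at most $\chi_g(G)$ distinct values on $N(v_j)$, and conclude that one of the $\chi_g(G)+1$ colours is free \emph{and} different from $c_g(v_i)$. But the second part does not follow: since $v_j$ currently carries colour $c_g(v_i)$, propriety already forces $c_g(v_i)\notin c(N(v_j))$, so your ``$\le\chi_g(G)$ colours on $N(v_j)$'' could be exactly the set $\{1,\dots,\chi_g(G)+1\}\setminus\{c_g(v_i)\}$, leaving \emph{no} legal eviction colour. A concrete failure: take $G=K_{2,2}$ with parts $\{a_1,a_2\}$, $\{b_1,b_2\}$; here $\chi_g(G)=2$, so $k=3$. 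Choose $c_g(a_i)=1$, $c_g(b_j)=2$, and start from the proper $3$-colouring $a_1\mapsto 2$, $a_2\mapsto 3$, $b_1,b_2\mapsto 1$. Processing $a_1$ first, both $b_1$ and $b_2$ must be evicted from colour~$1$; but $N(b_1)=\{a_1,a_2\}$ is coloured $\{2,3\}$, so there is no colour in $\{1,2,3\}\setminus\{1\}$ available for $b_1$. No ordering of the evictions, and no auxiliary greedy colouring of the unfinalised subgraph, changes this: the obstruction is local to $b_1$.

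What your argument lacks is a mechanism that \emph{re-normalises} the current colouring so that a spare colour is guaranteed before each round of evictions. The paper obtains this by working colour class by colour class rather than vertex by vertex: before finalising the $i$-th target class it greedily recolours \emph{all} not-yet-finalised vertices (with the finalised ones frozen), which forces the whole current colouring down to at most $\chi_g(G)$ colours and frees colour $\chi_g(G)+1$ as a universal eviction slot. This global sweep costs $\le n$ recolourings per class, and there are $\le\chi(G)$ classes, so the bound $2\chi(G)\cdot n$ to reach the canonical colouring (hence $4\chi(G)\cdot n$ for the diameter) still holds. Your bookkeeping on the number of evictions per vertex is fine; it is the feasibility of each individual eviction that needs this extra ingredient.
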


Theorem~\ref{thm:grundy} improves Jerrum's bound since $\chi_g(G) \leq \Delta(G)+1$ and can be arbitrarily smaller, on stars for instance. 

Besides, the bound is tight on complete bipartite graphs minus a matching~\cite{Cereceda} (see Figure~\ref{fig:completmoinsmatching}). Indeed, $m(G)=\Delta(G)+2=\chi_g(G)+1$ since if both $u_i$ and $v_i$ are given the same color for every $i$, the resulting coloring is an $n$-coloring for which no vertex can be recolored. However, $m(G)$ is not lower-bounded by a function of $\chi_g(G)$ since for any $k$, some tree $T_k$ satisfies $\chi_g(T_k)=k$ and $m(T_k)=3$~\cite{Beyer82}.
In addition, unlike the maximum degree, the grundy number is NP-hard to compute~\cite{Zaker06}.

\medskip

Given a graph $G$ and an integer $k$, it is NP-complete to decide if $tw(G) \leq k$~\cite{ArnborgCP87}. Nevertheless, for every fixed $k$, there is a linear-time algorithm to decide if the treewidth is at most $k$ (and find a tree decomposition)~\cite{Bodlaender93}. Graphs of treewidth $k$, being $k$-degenerate, are $(k+2)$-mixing~\cite{Cereceda}. However, the best upper-bound known on the recoloring diameter is exponential. In Section~\ref{sect:prooftw}, we prove that the recoloring diameter is polynomial for bounded treewidth graphs, a problem left open in~\cite{BonamyJ12}.

\begin{theorem}\label{thm:tw}
For every graph $G$, if $k \geq tw(G)+2$, then $G$ is $k$-mixing and its $k$-recoloring diameter is at most $2 \cdot (n^2+n)$.
\end{theorem}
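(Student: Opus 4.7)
Let $\alpha,\beta$ be two $k$-colorings of $G$ with $k\geq tw+2$. Take a tree decomposition $(T,\{B_t\}_{t\in V(T)})$ of $G$ of width $tw$, rooted at an arbitrary node $r$. For each $v\in V(G)$, let $t(v)$ be the topmost node of $T$ with $v\in B_{t(v)}$, and let $A_t=\{v:t(v)=t\}$; the $A_t$ partition $V(G)$. Order the vertices $v_1,\ldots,v_n$ by a post-order DFS of $T$, so that whenever $t'$ is a strict descendant of $t$, every vertex of $A_{t'}$ precedes every vertex of $A_t$. A standard property of tree decompositions then yields, for each $i$, that $N(v_i)\cap\{v_{i+1},\ldots,v_n\}\subseteq B_{t(v_i)}$, so $v_i$ has at most $tw$ ``forward'' neighbors.

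I transform $\alpha$ into $\beta$ in $n$ phases, indexed $i=n,n-1,\ldots,1$. The invariant after phase $i$ is that the current coloring agrees with $\beta$ on the ``frozen tail'' $\{v_i,v_{i+1},\ldots,v_n\}$ and that this set is never touched again. Phase $i$ thus reduces to recoloring $v_i$ to $\beta(v_i)$ while the frozen tail stays at $\beta$; the obstacles are earlier neighbors of $v_i$ that currently bear color $\beta(v_i)$.

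To clear such an obstacle $u$, I recolor $u$ to a color $c\neq\beta(v_i)$ avoiding the colors of $u$'s currently-colored neighbors. The critical observation is that $u$'s forward neighbors lie in $B_{t(u)}$, so at most $tw$ colors are forbidden by the frozen part of $u$'s neighborhood; since $k\geq tw+2$, at least two candidate colors remain. If one candidate also avoids $u$'s unfrozen neighbors, we use it directly; otherwise we cascade, first recoloring an unfrozen neighbor of $u$ that holds the obstructing color.

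The main technical hurdle is bounding the total number of recolorings by $2(n^2+n)$. My approach is to argue, via an amortized analysis, that phase $i$ contributes at most $O(n-i)$ recolorings, so that summing telescopes to the required quadratic bound. The delicate step is proving that cascades terminate within linearly many recolorings per phase; this should follow from the fact that cascades propagate strictly earlier in the ordering (corresponding to deeper bags of the tree decomposition), combined with the leverage that $k\geq tw+2$ always provides at least one ``safe'' color with respect to the frozen forward neighborhood of the vertex being recolored.
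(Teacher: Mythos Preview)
Your proposal has a genuine gap at exactly the point you flag as ``the delicate step'': bounding the cascades. The observation that each vertex has at most $tw$ forward neighbors is just the $tw$-degeneracy of $G$, and the recoloring scheme you describe is essentially the standard degeneracy argument, which is known to give only an exponential bound (this is precisely what the paper cites as the prior state of the art for bounded-treewidth graphs). Even granting that you can always choose the new color of an obstacle $u$ to avoid all of $u$'s \emph{forward} neighbors (so that every blocker is strictly earlier), this only bounds the \emph{depth} of the cascade recursion by $i-1$. It does not bound the branching: $u$ may have many backward neighbors currently holding the chosen color, each of which must be cleared, and each of those may in turn have many backward blockers. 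Nothing in your outline prevents the same early vertex from being revisited repeatedly within a single phase, and you have not supplied an amortization that would control this. The sentence ``this should follow from the fact that cascades propagate strictly earlier'' is where the proof is missing, not where it is routine.

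The paper's argument is structurally different and avoids cascades altogether. Rather than pushing $\alpha$ toward $\beta$ vertex by vertex, it fixes an $\ell$-\emph{complete} tree decomposition (all bags of size $tw+1$, adjacent bags differing in exactly one vertex), which partitions $V$ into $tw+1$ ``families'' (stable sets, one per bag). A coloring is \emph{$V$-coherent} if each family is monochromatic; such colorings are exactly the proper colorings of the chordal supergraph given by the decomposition, and any two of them are within distance $2n$ by a clique-recoloring argument. The work is then to show that any $k$-coloring can be made $V$-coherent in at most $n^2$ steps. This is done by processing babies of the residual decomposition one at a time; the key lemma (their Lemma~\ref{pfclaim1}) shows, by induction on the tree and on $\ell$, that one can eliminate a designated color from an entire subtree while recoloring each vertex at most once and preserving coherence on the already-processed part. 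That single-recoloring-per-vertex guarantee is what yields the quadratic bound, and it relies essentially on the recursive structure of the complete tree decomposition --- structure your ordering-and-cascade approach does not exploit.
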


Section~\ref{sect:prooftw} is devoted to a proof of Theorem~\ref{thm:tw}, which is independent from Theorem~\ref{thm:grundy}. The quadratic upper bound on the recoloring diameter was known for chordal graphs~\cite{BonamyJ12}, but its generalization to bounded treewidth graphs was left open. As shown in the case of chordal graphs~\cite{BonamyJ12} (which is a subclass of graphs of treewidth $\chi(G)-1$), the mixing number is tight, and the recoloring diameter is tight up to a constant factor. The existence of a polynomial recoloring diameter in the case of $k$-degenerate graph is still open and will be discussed in Section~\ref{sec:ccl}. A sketch of the proofs of Theorem~\ref{thm:grundy} and~\ref{thm:tw} were published in the extended abstract of LAGOS'13~\cite{BonamyB13}.

\medskip

The last main results of this paper deal with cographs and distance-hereditary graphs. The class of cographs is the class of $P_4$-induced free graphs, i.e. the class of graphs that do not contain a path on four vertices as an induced subgraph. By Theorem~\ref{thm:grundy}, we obtain the following:

\begin{corollary}\label{thm:cographs}
For every cograph $G$, if $k \geq \chi(G)+1$, then $G$ is $k$-mixing and its $k$-recoloring diameter is at most $\mathcal{O}(\chi(G) \cdot n)$.
\end{corollary}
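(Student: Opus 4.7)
The plan is to derive this corollary directly from Theorem~\ref{thm:grundy} by showing that cographs satisfy $\chi_g(G) = \chi(G)$. Once we have this equality, plugging $k \geq \chi(G)+1 = \chi_g(G)+1$ into Theorem~\ref{thm:grundy} immediately gives both the $k$-mixing property and the diameter bound $4 \cdot \chi(G) \cdot n = \mathcal{O}(\chi(G)\cdot n)$.

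So the whole task reduces to proving that $\chi_g(G) = \chi(G)$ for every cograph $G$. I would do this by induction on $|V(G)|$, exploiting the recursive structure of cographs: every cograph with at least two vertices is either a disjoint union $G_1 \cup G_2$ or a complete join $G_1 \oplus G_2$ of two smaller cographs. In both cases the chromatic number behaves well ($\chi(G_1 \cup G_2) = \max(\chi(G_1), \chi(G_2))$ and $\chi(G_1 \oplus G_2) = \chi(G_1)+\chi(G_2)$), so I would verify the analogous identities for the grundy number.

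For the disjoint union, any greedy coloring restricted to $G_i$ is a greedy coloring of $G_i$, hence $\chi_g(G_1 \cup G_2) = \max(\chi_g(G_1),\chi_g(G_2))$. For the complete join, every color used on $G_1$ is forbidden on $G_2$ and vice versa, so greedy colorings split into a greedy coloring of each side with disjoint color sets, giving $\chi_g(G_1 \oplus G_2) = \chi_g(G_1) + \chi_g(G_2)$. Applying the induction hypothesis $\chi_g(G_i) = \chi(G_i)$ in each case yields $\chi_g(G) = \chi(G)$.

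The only genuinely delicate point is the complete join identity: one has to argue carefully that in any ordering $\mathcal{O}$, the greedy color received by a vertex $v \in G_i$ depends only on the restriction of $\mathcal{O}$ to $V(G_i)$ together with the (constant) set of colors already used on $V(G_{3-i})$, and that the two color palettes are necessarily disjoint because of the join edges. This is straightforward but is where the $P_4$-free structure (concretely, the join/union decomposition) is really used; no such clean decomposition is available for larger classes such as $P_5$-free graphs, which is consistent with the remark in the abstract that the cograph case is tight.
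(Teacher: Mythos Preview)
Your proposal is correct and follows exactly the paper's approach: the paper derives Corollary~\ref{thm:cographs} from Theorem~\ref{thm:grundy} together with the fact that $\chi_g(G)=\chi(G)$ for every cograph, which it simply cites from~\cite{Gyarfas88}. You supply an inductive proof of that identity instead of citing it; the union case is immediate, and for the join your key observation that the colour sets on $G_1$ and $G_2$ are disjoint is exactly what makes the restriction of the greedy colouring to each side a grundy colouring of that side (after order-preserving relabelling), giving $\chi_g(G_1\oplus G_2)\le \chi_g(G_1)+\chi_g(G_2)$.
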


Indeed, the grundy number of cographs is equal to their chromatic number~\cite{Gyarfas88}. Recall that cographs are $P_4$-free graphs. Can we generalize this result to the class of $P_k$-free graphs for any fixed $k$? It is easy to answer negatively to this question by observing that complete bipartite graphs minus a matching (see Fig.~\ref{fig:completmoinsmatching}) are $P_6$-free graphs and that the mixing number of such graphs is arbitrarily large. We can build an ad-hoc $P_5$-free graph with chromatic number $4$ which is not $5$-mixing (see Fig.~\ref{fig:P5free}). In fact, we can even build a family $(G_k)_{k \geq 3}$ of $P_5$-free graphs that admit both a $(k+1)$-coloring and a frozen $2k$-coloring, as follows. Take a clique $\{u_1,u_2,\ldots,u_{2k}\}$, remove all edges $(u_i,u_{k+i})$, add for all $i$ a vertex $v_i$ adjacent to all the $u_j$'s except $u_i$. We obtain a $(k+1)$-coloring $\alpha_k$ by setting $\alpha_k(u_i)=\alpha_k(u_{k+i})=i$ for all $1 \leq i \leq k$ and $\alpha_k(v_i)=k+1$ for all $i$ such that $v_i$ exists, and a frozen $2k$-coloring $\beta_k$ by setting $\beta_k(u_i)=i$ and $\beta_k(v_i)=\beta_k(u_i)$ for all $1 \leq i \leq 2k$. Not all edges are necessary, but it is easier that way to verify that there is indeed no induced $P_5$.

Note that, since the $3$-recoloring diameter of the class of induced paths is quadratic~\cite{BonamyJ12}, only graph classes excluding long paths can have a linear recoloring diameter. Corollary~\ref{thm:cographs} ensures that $P_4$-free graphs admit a linear recoloring diameter. So we raise the following question: does the class of $P_k$-free graphs with mixing number $\chi(G)+1$ have a linear recoloring diameter?

\begin{figure}
\centering
\subfloat[][A $4$-coloring.]{
\centering
\begin{tikzpicture}[scale=1.5]
\tikzstyle{blacknode}=[draw,circle,fill=black,minimum size=6pt,inner sep=0pt]

\draw (0,-1) node[blacknode] (u1) [label=90:$2$] {};
\foreach \i in {2,3,4,5,6}
{\pgfmathtruncatemacro{\j}{\i - 1}
\pgfmathtruncatemacro{\color}{1+mod(\i,3)}
\draw (u\j)
-- ++(-30+\j*60:1cm) node[blacknode] (u\i) [label=180+-90+\j*60:$\color$] {};}
\draw (u1) edge node  {} (u6);

\foreach \i/\angle in {1,2,3,4,5,6}
{\pgfmathtruncatemacro{\j}{1+mod(\i,6)}
\pgfmathtruncatemacro{\k}{1+mod(\i+1,6)}
\draw (u\i) edge node  {} (u\j);
\draw (u\i) edge node  {} (u\k);
}

\foreach \i/\angle in {1/-60,3/60,4/120,6/-120}
{
\pgfmathtruncatemacro{\j}{1+mod(\i+4,6)}
\pgfmathtruncatemacro{\k}{1+mod(\i,6)}
\pgfmathtruncatemacro{\l}{1+mod(\i+1,6)}
\draw(\angle:1.7) node[blacknode] (v\i) [label=\angle: $4$] {};
\draw (v\i) edge [bend left] node  {} (u\i);
\draw (v\i) edge [bend left] node  {} (u\j);
\draw (v\i) edge [bend right] node  {} (u\k);
\draw (v\i) edge [bend right] node  {} (u\l);
}
\end{tikzpicture}}
\qquad
\subfloat[][A frozen $5$-coloring.]{
\centering
\begin{tikzpicture}[scale=1.5]
\tikzstyle{blacknode}=[draw,circle,fill=black,minimum size=6pt,inner sep=0pt]

\draw (0,-1) node[blacknode] (u1) [label=90:$2$] {};
\foreach \i/\color in {2/3,3/4,4/2,5/5,6/1}
{\pgfmathtruncatemacro{\j}{\i - 1}
\draw (u\j)
-- ++(-30+\j*60:1cm) node[blacknode] (u\i) [label=180+-90+\j*60:$\color$] {};}
\draw (u1) edge node  {} (u6);

\foreach \i/\angle in {1,2,3,4,5,6}
{\pgfmathtruncatemacro{\j}{1+mod(\i,6)}
\pgfmathtruncatemacro{\k}{1+mod(\i+1,6)}
\draw (u\i) edge node  {} (u\j);
\draw (u\i) edge node  {} (u\k);
}

\foreach \i/\angle/\color in {1/-60/5,3/60/1,4/120/3,6/-120/4}
{
\pgfmathtruncatemacro{\j}{1+mod(\i+4,6)}
\pgfmathtruncatemacro{\k}{1+mod(\i,6)}
\pgfmathtruncatemacro{\l}{1+mod(\i+1,6)}
\draw(\angle:1.7) node[blacknode] (v\i) [label=\angle: $\color$] {};
\draw (v\i) edge [bend left] node  {} (u\i);
\draw (v\i) edge [bend left] node  {} (u\j);
\draw (v\i) edge [bend right] node  {} (u\k);
\draw (v\i) edge [bend right] node  {} (u\l);
}
\end{tikzpicture}}
\caption{A $P_5$-free graph that is $4$-colorable and not $5$-mixing.}
\label{fig:P5free}
\end{figure}
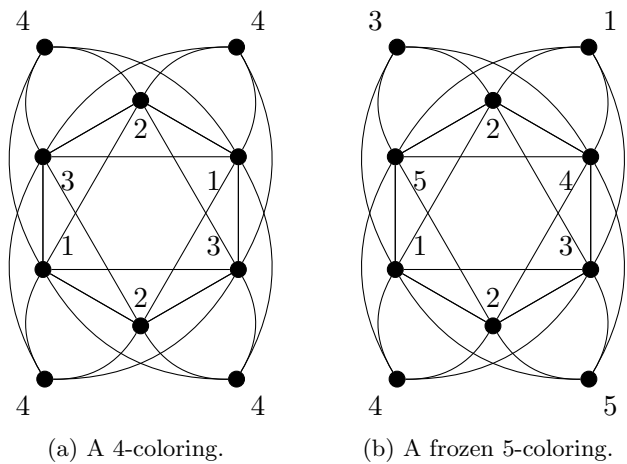

Using a slight generalization of cographs, we finally prove in Section~\ref{sect:CDHproof} that distance-hereditary graphs admit a quadratic recoloring diameter.
\begin{theorem}\label{thm:distancehereditary}
For every distance-hereditary graph $G$ and every $k \geq \chi(G)+1$, the graph $G$ is $k$-mixing and it recoloring diameter is at most $\mathcal{O}(k \cdot \chi(G) \cdot n^2)$.
\end{theorem}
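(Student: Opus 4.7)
The plan is to proceed by induction on $n = |V(G)|$, exploiting the classical structural characterization that every distance-hereditary graph on at least two vertices contains either a pendant vertex or a pair of (true or false) twins. In each case, removing the appropriate vertex $v$ yields a smaller distance-hereditary graph $G' = G - v$ with $\chi(G') \leq \chi(G)$, so the hypothesis $k \geq \chi(G)+1$ remains valid on $G'$ and the inductive hypothesis applies. Note that Theorem~\ref{thm:grundy} cannot be invoked directly, since the grundy number of a distance-hereditary graph can exceed its chromatic number even on trees.

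The key move is to strengthen the inductive statement to bound, in addition to the total sequence length, the number of times any single vertex is recolored: I would produce a recoloring sequence of length $\mathcal{O}(k\chi(G) n^2)$ in which every vertex is recolored at most $\mathcal{O}(k\chi(G) n)$ times. This finer control is crucial to avoid an exponential blow-up through the $n$ levels of recursion, since a naive lift that duplicates every move of $u$ into a move of $v$ would escalate to $2^n$ across nested twins. In the \emph{pendant case}, the lift copies each move of the $G'$-sequence and, whenever a move would set $u$ to $v$'s current color, inserts a single recoloring of $v$ beforehand (always possible since $k \geq 3$); the total overhead is bounded by the number of moves of $u$, i.e., $\mathcal{O}(k\chi(G) n)$ by the inductive invariant. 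In the \emph{false twin case}, I would first synchronize $v$ with $u$ in a single move and then mirror each of $u$'s subsequent moves on $v$, with overhead again proportional to the number of moves of $u$ rather than to the whole sequence length.

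The \emph{main obstacle} is the \emph{true twin case} ($u \sim v$, $N[u]=N[v]$), where the adjacency forbids any direct mirroring of $u$'s moves onto $v$, and at times a naive lift gets stuck because no color is immediately available for $v$ other than the one $u$ is moving to. Resolving this is exactly where the slight generalization of cographs alluded to in the introduction comes in: I would enlarge the class under consideration to one that also satisfies the strengthened induction invariant, and lift each move of $u$ to at most a constant number of moves in $G$ via short swap sequences on the color pair $(c_u, c_v)$ that exploit the color slack $k \geq \chi(G)+1$ (which guarantees at least two unused colors off the common neighborhood). Combining the three cases yields the recurrence $L(n) \leq L(n-1) + \mathcal{O}(k\chi(G) n)$, which solves to $L(n) = \mathcal{O}(k\chi(G) n^2)$, matching the claimed bound.
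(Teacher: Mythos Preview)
Your pendant and false-twin cases are fine, and tracking a per-vertex move bound is the right strengthening of the induction. The true-twin case, however, is not resolved, and the sketch you give has two distinct holes.

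First, the ``two unused colors off the common neighborhood'' you invoke are precisely $c_u$ and $c_v$ themselves. For true twins $u,v$ with $W=N(u)\setminus\{v\}$, every proper coloring has $W$ avoiding $\{c_u,c_v\}$, but nothing prevents $W$ from occupying all of the other $k-2$ colors at an intermediate step, even though $\chi(G[W])\le\chi(G)-2=k-3$. Concretely, let $G$ be $K_4$ minus one edge on $\{u,v,w_1,w_2\}$ with $w_1w_2\notin E$; then $\chi(G)=3$, $k=4$, and for $\alpha=(u{=}3,\,v{=}4,\,w_1{=}1,\,w_2{=}2)$ and $\beta=(u{=}4,\,v{=}3,\,w_1{=}1,\,w_2{=}2)$ the one-step $(G{-}v)$-sequence $u\colon 3\to 4$ cannot be lifted: $v$'s neighbors carry $\{1,2,3\}$, so $v$ is frozen at $4$. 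Any fix must recolor some $w_i$, and once the $w_i$ have outside neighbors of their own this is no longer a constant-size local patch. Second, even granting such a patch, you have not bounded how often it is needed: the synchronization trick that limits conflicts to moves of $u$ in the false-twin case is unavailable here (you cannot keep $c_v=c_u$), so every move of every $w\in W$ to $v$'s current color also forces a move of $v$, giving up to $|W|\cdot O(k\chi(G)n)$ extra moves at one level and only an $O(k\chi(G)n^3)$ bound overall.

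The paper does not induct vertex by vertex. Via Lemma~\ref{lemme:treeparthereditary} it organizes $G$ as a rooted tree of cograph bags, each child attached along a module of its parent; it then processes this tree bottom-up, treating each maximal quasi-leaf as a \emph{quasi-cograph} (a cograph with cliques partially joined to pairwise non-overlapping modules) and using Lemmas~\ref{lem:quasioptimal} and~\ref{lem:quasicographmodular} to drive it to a modular coloring in $O(k\chi(G))$ moves per vertex, after which color classes are merged and the tree shrinks. True twins are absorbed into the cograph/module machinery and never appear as a separate case; the ``slight generalization of cographs'' in the introduction is the quasi-cograph, not an enlarged induction class in your sense.
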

Once again, since distance-hereditary graphs contain long paths, the upper bound on the recoloring diameter is optimal up to a constant factor.
Since the mixing number of complete bipartite graph minus a matching is not bounded by a function of the chromatic number, neither is it the case for any graph class containing them, thus for comparability graphs, perfectly orderable graphs, and then perfect graphs, answering a question of \cite{BonamyJ12}. Considering that comparability graphs form the smallest interesting superclass of distance-hereditary graphs, Theorem~\ref{thm:distancehereditary} is in some sense optimal with regards to the graph class, as it is optimal as to the bound on the recoloring diameter.

Finally note that all our results are also algorithmic. Indeed each proof can be adapted into a polynomial-time algorithm that will find a short recoloring sequence. The one exception is for bounded treewidth graphs. Indeed determining the treewidth of a graph is an $NP$-hard problem. Nevertheless, for every fixed $k$, deciding whether the treewidth of the graph is at most $k$ (and if so, producing a corresponding tree decomposition) can be done in linear time~\cite{Bodlaender93}.

\section{Preliminaries}
Let us first recall some classical definitions on sets. Let $X$ and $Y$ be two subsets of $V$. The set $X \setminus Y$ is the subset of elements $x \in X$ such that $x \notin Y$. By abuse of notation, given a set $X$ and an element $x$, $X\setminus x$ denotes $X \setminus \{ x\}$ and $\{x\}$ will sometimes be denoted by $x$. The \emph{size} $|X|$ of $X$ is its number of elements. 

Let $G=(V,E)$ be a graph. For any coloring $\alpha$ of $G$, we denote by $\alpha(H)$ the set of colors used by $\alpha$ on the subgraph $H$ of $G$. The \emph{neighborhood} of a vertex $x$, denoted by $N(x)$ is the subset of vertices $y$ such that $xy \in E$. The \emph{length} of a path is its number of edges. The \emph{distance} between two vertices $x$ and $y$, denoted $d(x,y)$, is the minimum length of a path between these two vertices. When there is no path, the distance is infinite. The \emph{distance} between two $k$-colorings of $G$ is implicitely the distance between them in the recoloring graph $R_k(G)$. Let us first recall a classical result on recoloring.

\begin{lemma}\label{lemma:clique}
If $k \geq n+1$, any $k$-coloring of $K_n$ can be transformed into any other by recoloring every vertex at most twice.
\end{lemma}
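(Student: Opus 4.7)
The plan is to exploit the rigid structure of proper colorings of a clique: every proper coloring of $K_n$ uses exactly $n$ distinct colors, and since $k \geq n+1$, at every moment there is at least one color unused by the current coloring. I will use this slack as a ``buffer'' to route vertices to their target colors.

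Let $\alpha$ and $\beta$ be the source and target $k$-colorings. I would first form an auxiliary digraph $D$ on the color set: for each vertex $v$ with $\alpha(v) \neq \beta(v)$, place an arc from $\alpha(v)$ to $\beta(v)$. Because $\alpha$ and $\beta$ are proper colorings of $K_n$, every color is used by at most one vertex in each, so every color of $D$ has in-degree and out-degree at most $1$. Hence $D$ decomposes into a disjoint union of directed paths and directed cycles.

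For a directed path $c_1 \to c_2 \to \cdots \to c_m$, let $v_i$ denote the vertex with $\alpha(v_i)=c_i$. I claim the sink $c_m$ is free in $\alpha$: if some vertex $u$ had $\alpha(u)=c_m$, then $u$ would contribute an out-arc at $c_m$, contradicting it being a sink (and $u \neq v_{m-1}$ by properness). So I can recolor $v_{m-1}$ to $c_m$, freeing $c_{m-1}$; then recolor $v_{m-2}$ to $c_{m-1}$; and so on down to $v_1$. Each vertex of the path is recolored exactly once. For a directed cycle $c_1 \to c_2 \to \cdots \to c_m \to c_1$, every cycle color is used in $\alpha$, so I need a temporary parking spot. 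I pick any color $c^*$ not currently used (which exists since at most $n<k$ colors are used), recolor $v_1$ from $c_1$ to $c^*$, then apply the same ``shift backwards'' strategy as in the path case to move $v_m$ into $c_1$, $v_{m-1}$ into $c_m$, down to $v_2$ into $c_3$, and finally recolor $v_1$ from $c^*$ to $c_2$. Only $v_1$ is recolored twice; all other cycle vertices are recolored once.

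The only thing to verify is that the components of $D$ do not interfere: processing one component must leave the free colors needed by later components still free. This is immediate if components are handled sequentially, because after finishing a component, the set of used colors is exactly what it was at the start of that component, plus the replacement of its $\alpha$-colors by its $\beta$-colors; no source of any untreated path nor any color of an untreated cycle is touched. So one can process cycles first (reusing the same $c^*$ each time, which becomes free again at the end of each cycle), and then process paths. Vertices with $\alpha(v)=\beta(v)$ are never recolored. I expect the only subtle point to be this non-interference argument, but it follows directly from the in/out-degree-$\leq 1$ structure of $D$.
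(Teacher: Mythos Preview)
Your proof is correct and follows essentially the same strategy as the paper: build an auxiliary functional digraph encoding the blocking relation, decompose it into directed paths and cycles, clear each path from its sink backward, and break each cycle with one extra recoloring using a spare color. The only cosmetic difference is that the paper places the digraph on the vertices of $K_n$ (with an arc $x\to y$ when $y$ currently has color $\beta(x)$) rather than on colors; the two are isomorphic via $v\mapsto \alpha(v)$.
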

\begin{proof}
Let $\alpha, \beta$ be two colorings of $K_n$. Assume $K_n$ is initially colored in $\alpha$. We plan to recolor $K_n$ into $\beta$ while recoloring each vertex at most twice. Let $D$ be the digraph on $n$ vertices with an arc $xy$ if $y$ is currently colored in $\beta(x)$. Informally $xy$ is an arc if the current color of $y$ prevents the recoloring of $x$ into its final color (in $\beta$). The coloring of $K_n$ is proper at any time: no two vertices are colored identically, so $d^+(x) \leq 1$. The same argument holds for $\beta$, so $d^-(x) \leq 1$. Hence $D$ is a disjoint union of directed paths and of circuits. 

We recolor every directed path as follows. Let $x_0,x_1,\ldots,x_k$ be a directed path. Then $d^+(x_k)=0$, and we can recolor $x_k$ into its final color in $\beta$. Then $d^+(x_{k-1})=0$, and we recolor $x_{k-1}$ into its final in $\beta$. We repeat that operation until $x_0$ is recolored. Now every vertex of the directed path is an isolated vertex in $D$, since no two vertices are colored identically in $\beta$.

Let $x_0,x_1,\ldots,x_k,x_0$ be a circuit. Since $k \geq n+1$, $x_0$ can be recolored with a free color. After this recoloring, we have $d^+(x_k)=0$, so $x_0,\ldots,x_k$ becomes a directed path, which we recolor as such. We then recolor $x_0$ into $\beta(x_0)$. Now every vertex of circuit is an isolated vertex in $D$. We repeat that operation until no circuit remains in $D$. Since we already recolored every directed path and $D$ was initially a disjoint union of directed paths and circuits, the clique $K_n$ is now colored in $\beta$.
\end{proof}

\section{Mixing number and grundy number}\label{sect:grundy}

This section is devoted to a proof of Theorem~\ref{thm:grundy}, which is derived from the following lemma.
\begin{lemma}\label{lem:grundystep}
Let $G$ be a graph on $n$ vertices, and $k \geq \chi_g(G)+1 > \ell$. For any $k$-coloring $\alpha$ of $G$ and any grundy $\ell$-coloring $\beta$ of $G$, we have $d(\alpha,\beta)\leq (2 \cdot \ell-1) \cdot n$.
\end{lemma}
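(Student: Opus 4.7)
The plan is to prove the lemma by induction on $\ell$, the number of colors used by the grundy coloring $\beta$. The base case $\ell = 1$ is easy: a grundy $1$-coloring forces $G$ to be edgeless, so any $k$-coloring $\alpha$ can be transformed to the all-color-$1$ coloring $\beta$ by recoloring each vertex independently, for a total of at most $n \leq (2 \cdot 1 - 1) n$ recolorings.

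For the inductive step, let $x_1,\ldots,x_n$ be an ordering certifying that $\beta$ is a grundy $\ell$-coloring, and set $V_\ell := \beta^{-1}(\ell)$. The set $V_\ell$ is independent, and the restriction of $\beta$ to $V \setminus V_\ell$ is a grundy coloring of $G[V \setminus V_\ell]$ using at most $\ell - 1$ colors (with the inherited ordering). Exploiting that $k \geq \ell + 1$, I would use color $\ell+1$ as a ``parking color'' and carry out the transformation in three phases. In \emph{Phase~1}, I would transform $\alpha$ into an intermediate coloring $\alpha'$ in which every vertex of $V_\ell$ has color $\ell+1$: processing the vertices of $V_\ell$ one at a time, for each $v \in V_\ell$ first recolor every neighbor of $v$ currently colored $\ell+1$ (which must lie in $V \setminus V_\ell$ by independence of $V_\ell$) to a safe color in $\{1,\ldots,k\} \setminus \{\ell+1\}$, and then recolor $v$ to $\ell+1$. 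In \emph{Phase~2}, the induction hypothesis applied to $G[V \setminus V_\ell]$ with the grundy $(\ell-1)$-coloring $\beta|_{V \setminus V_\ell}$ yields a transformation of $\alpha'|_{V \setminus V_\ell}$ into $\beta|_{V \setminus V_\ell}$ in at most $(2\ell - 3)(n - |V_\ell|)$ recolorings; since $V_\ell$ remains parked at color $\ell+1$ and $\beta|_{V \setminus V_\ell}$ only uses colors in $\{1,\ldots,\ell - 1\}$, each intermediate coloring stays proper on all of $G$. Finally, in \emph{Phase~3}, I would recolor each $v \in V_\ell$ from $\ell+1$ to $\ell$; since $V_\ell$ is independent and all its neighbors now carry $\beta$-colors in $\{1,\ldots,\ell - 1\}$, color $\ell$ is always free, and this phase costs exactly $|V_\ell|$ recolorings.

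If Phase~1 can be implemented in at most $2n$ recolorings, the total is at most $2n + (2\ell - 3)(n - |V_\ell|) + |V_\ell| = (2\ell - 1) n - (2\ell - 4) |V_\ell|$, which is bounded by $(2\ell - 1) n$ for every $\ell \geq 2$, as required. The main obstacle is thus controlling Phase~1 within the $2n$ budget: when a neighbor $w \in V \setminus V_\ell$ of some $v \in V_\ell$ is currently colored $\ell+1$ and must be cleared, finding a safe color is non-trivial because, a priori, the neighborhood of $w$ could span all $k - 1$ other colors. To handle this, I plan to process the vertices of $V_\ell$ in the reverse grundy order and to exploit $k \geq \chi_g(G)+1$, which guarantees that no induced subgraph admits a greedy coloring using $k$ colors, so that a safe recoloring can always be found with amortized cost at most two per vertex. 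The careful bookkeeping of Phase~1 and the use of the grundy hypothesis at that point form the technical heart of the argument.
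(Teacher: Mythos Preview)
Your Phase~2 has a genuine gap. When you apply the induction hypothesis to $G' = G[V \setminus V_\ell]$ with the full palette of $k$ colors, the resulting recoloring sequence in $G'$ may at some intermediate step assign color $\ell+1$ to a vertex $w \in V \setminus V_\ell$ that is adjacent in $G$ to some $v \in V_\ell$; since $v$ is parked at $\ell+1$, the coloring is then improper on $G$. Your justification (``$\beta|_{V \setminus V_\ell}$ only uses colors in $\{1,\ldots,\ell-1\}$'') concerns only the final coloring, not the intermediate ones. The natural fix is to run Phase~2 on the palette $\{1,\ldots,k\}\setminus\{\ell+1\}$ of $k-1$ colors, but then the induction requires $k-1 \geq \chi_g(G')+1$, i.e., $\chi_g(G') \leq \chi_g(G)-1$. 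This can fail: let $G$ be the disjoint union of a triangle and a $P_4$, so $\chi_g(G)=3$, and let $\beta$ be the grundy $3$-coloring that colors the triangle $1,2,3$ and the path $1,2,1,2$; then $V_3$ is a single triangle vertex, and $G \setminus V_3$ (an edge together with a $P_4$) still has grundy number $3$. So peeling off the top colour class does not buy you a colour for the induction.

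The paper sidesteps both this and your Phase~1 difficulty by peeling off the \emph{bottom} colour class $V_1^\beta$ instead. The point is that $V_1^\beta$ is a dominating independent set (every other vertex has a neighbour in it, by the grundy property), so prepending $V_1^\beta$ to any greedy order on $G' = G \setminus V_1^\beta$ shows $\chi_g(G') < \chi_g(G)$; the induction on $G'$ then legitimately runs with colour $1$ forbidden. To free a colour in the first place, the paper does not try to clear neighbours one by one as in your Phase~1; it simply greedily recolours $\alpha$ (processing the colour classes $V_1^\alpha, V_2^\alpha, \ldots$ in order and giving each vertex the smallest proper colour), which in at most $n$ steps yields a coloring using at most $\chi_g(G)$ colours, so colour $\chi_g(G)+1$ is globally free. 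Your Phase~1 sketch is where you hoped the hypothesis $k \geq \chi_g(G)+1$ would enter, but you never actually produce an argument bounding the cascade at $2n$ recolourings; the paper's greedy-recolouring trick is precisely the clean device that makes this step trivial.
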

\begin{proof}
Let us prove it by induction on $\ell$.
If $\ell=1$, $G$ has no edge and in $n$ steps, we can transform $\alpha$ into $\beta$.
Assume now that $\ell\geq 2$. For any integer $i$ and any coloring $c$, $V_i^c$ is the set of vertices of color $i$ in $c$. Iteratively on $i$ from $1$ to $\ell$, we recolor the vertices of $V^{\alpha}_i$ with the smallest color for which the coloring is still proper. The resulting coloring $\gamma$ of $G$ is the greedy coloring relative to the order $V^{\alpha}_1, V^{\alpha}_2, \ldots, V^{\alpha}_{\ell}$. Hence $\gamma$ is an (at most) $\chi_g(G)$-coloring. In addition, $d(\alpha, \gamma) \leq n$, since no vertex is recolored twice. 
Since no vertex is colored with $\chi_g(G)+1$ in $\gamma$ and $k \geq \chi_g(G)+1$, we recolor vertices of $V_1^{\gamma}\setminus V_1^{\beta}$ with color $\chi_g(G)+1$. We then recolor vertices of $V_1^{\beta}$ with $1$ if needed. The resulting coloring $\delta$ satisfies $V_1^{\delta}=V_1^{\beta}$. In addition, $d(\gamma,\delta) \leq n$, since no vertex is recolored twice. 
Let us now prove that the induction hypothesis holds on $G'=G(V \setminus V_1^{\beta})$ with $k-1$, $\ell-1$, $\delta_{G'}$ and $\beta_{G'}$. We have $\chi_g(G') < \chi_g(G)$. Indeed, assume that there is an order $\mathcal{O}$ on $V \setminus V_1^{\beta}$ such that $\chi_g(G')= \chi_g(G)$. Consider the order $\mathcal{O}'=(V_1^{\beta},\mathcal{O})$ on $V$. Every vertex of $\mathcal{O}$ has a neighbor on $V_1^{\beta}$ (since $\beta$ is grundy), so the greedy coloring relative to $\mathcal{O}'$ needs $\chi_g(G)+1$ colors for $G$, a contradiction. We also have $\ell-1 \leq \chi_g(G')$. Indeed, for $\mathcal{O}$ the order on $V$ corresponding to $\beta$, the greedy coloring of $G'$ relative to $\mathcal{O}\setminus V_1^{\beta}$ requires at least $\ell-1$ colors. So we can apply the induction hypothesis on $G'$ with $k-1$, $\ell-1$, $\delta_{G'}$ and $\beta_{G'}$ (the color $1$ is forgotten, thus a greedy coloring will start with color $2$, and $\beta_{G'}$ is an $\ell-1$ grundy coloring of $G'$). This ensures that $G'$ can be recolored in $(2 \cdot (\ell-1)-1) \cdot |V(G')| \leq (2 \cdot (\ell-1) -1) \cdot n$ steps.
Consequently, $d(\alpha,\beta)\leq d(\alpha,\gamma)+d(\gamma,\delta)+d(\delta,\beta) \leq (2 \cdot \ell -1) \cdot n$.
\end{proof}
Finally, we consider a graph $G$ on $n$ vertices, an integer $k \geq \chi_g(G)+1$, and two $k$-colorings $\alpha_1$ and $\alpha_2$ of $G$. By definition of the chromatic number, we know there exists a grundy $\chi(G)$-coloring $\beta$ of $G$. Since, $d(\alpha_1,\alpha_2) \leq d(\alpha_1,\beta)+d(\beta,\alpha_2)$, Lemma~\ref{lem:grundystep} ensures that $d(\alpha_1,\alpha_2) \leq 4 \cdot \chi(G) \cdot n$. However, finding a grundy $\chi(G)$-coloring of $G$ is NP-complete. For a polynomial-time algorithm, we can derive from $\alpha_2$ a grundy coloring $\delta$ in at most $n$ steps, and apply Lemma~\ref{lem:grundystep} on $\alpha_1$ and $\delta$, to obtain that $d(\alpha_1,\alpha_2) \leq 2 \cdot \chi_g(G) \cdot n$.

\section{Bounded treewidth graphs}\label{sect:prooftw}

The aim of this section is to prove Theorem~\ref{thm:tw}. A \emph{tree} is a connected graph without cycles. To avoid confusion, its vertices are called \emph{nodes}. A \emph{tree decomposition} of $G$ is a tree $T$ such that:
\begin{itemize}
 \item To every node $u$ of $T$, we associate a \emph{bag} $B_u \subseteq V$.
 \item For every edge $xy$ of $G$, there is a node $u$ of $T$ such that both $x$ and $y$ are in $B_u$.
 \item For every vertex $x \in V$, the set of nodes of $T$ whose bags contain $x$ forms a non-empty subtree in $T$.
\end{itemize}
The \emph{size} of a tree decomposition $T$ is the largest number of vertices in a bag of $T$, minus one. The \emph{treewidth} $tw(G)$ of $G$ is the minimum size of a tree decomposition of $G$.

A \emph{chordal graph} is a graph that admits a perfect elimination ordering: that is, the vertices of the graphs can be ordered $v_1,v_2,\cdots,v_p$ in such a way that the neighborhood of any vertex $v_i$ in $\{v_1,v_2,\cdots,v_{i-1}\}$ forms a clique. Any chordal graph $G$ admits a tree decomposition whose bags are the maximal cliques of $G$.

Actually, any tree decomposition of a graph $G$ can be viewed as a chordal graph $H$ with vertex set $V(G)$ that admits $G$ as a subgraph ($H$ is a \emph{supergraph} of $G$). 
Informally, we transform step-by-step any $(tw+2)$-coloring of a graph into a $(tw+2)$-coloring of a ''good'' chordal supergraph with the same treewidth.

We first introduce particular tree decompositions, called complete tree decompositions. In such decompositions, all the bags have exactly the same size and any two adjacent bags differ on exactly one vertex. Two vertices are parents if their subtrees are, in some sense, adjacent. A $V$-coherent coloring is a coloring where parents are colored identically, i.e. a coloring of the vertices that is compatible with the chordal supergraph corresponding to the complete tree decomposition.

The proof is divided into two parts. First we prove that the distance between $V$-coherent colorings is linear. We then prove that any coloring can be transformed into a $V$-coherent coloring with a quadratic number of recoloring steps as long as the number of colors is at least $tw(G)+2$.

\subsection{Families}

\begin{figure}
 \centering
  \includegraphics[scale=1,height=40mm]{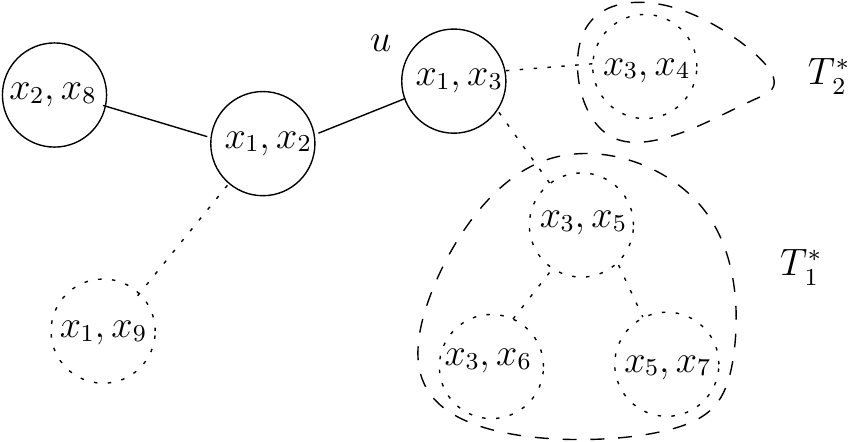}
  \caption{A $1$-complete tree decomposition $T$.}
  \label{fig:completedecomp}
\end{figure}

A tree decomposition $T$ of a graph $G$ is \textit{$\ell$-complete} when every bag has size $\ell+1$ and any two adjacent nodes $u,v$ satisfy $|B_u \cap B_v| = \ell$. In other words, for every edge $uv$ of $T$, there exists a vertex $x$ such that $x = B_u \setminus B_v$. For any subtree $T'$ of $T$, $B_{T'}$ denotes $\cup_{v \in T'} B_v$. Let $X \subseteq V$. The tree decomposition $T[V \setminus X]$ is the same tree as $T$ except that the bag of every node $u$ is $B_u \setminus X$, and that every edge $uv$ of $T$ is contracted if $B_u\setminus X \subseteq B_v \setminus X$. In Fig.~\ref{fig:completedecomp}, the full-line edges subtree is $T[V \setminus \{ x_4,x_5,x_6,x_7,x_9 \}]$. The following remark is an immediate consequence of the definition of complete tree decomposition.

\begin{remark}\label{rem:subtreelcomplete}
Any connected subtree of an $\ell$-complete tree decomposition is still $\ell$-complete.
\end{remark}

A \emph{baby} is a vertex of $V$ that appears in exactly one bag $B_u$, where $u$ is a leaf of $T$. Note that all the neighbors of a baby $x$ are in $B_u$. In Fig.~\ref{fig:completedecomp}, vertex $x_8$ is a baby.

\begin{remark}\label{rem:lcomplete}
Let $T$ be an $\ell$-complete tree decomposition. If $x$ is a baby then $T[V \setminus x]$ is $\ell$-complete.
\end{remark}
\begin{proof}
Let $u$ be the unique node whose bag contains $x$. Then the only modified bag in $T[V \setminus x]$ is $B_u$. Let $v$ be the father of $u$ in $T$. Since $T$ is complete, $B_u \setminus B_v = x$ in $T$, so the edge $uv$ is contracted in $T[V \setminus x]$. Therefore $T[V \setminus x]$ is exactly $T \setminus u$ which is $\ell$-complete by Remark~\ref{rem:subtreelcomplete}.
\end{proof}

We first prove that every graph admits complete tree decompositions. Then we derive from it the notion of parents and family between vertices of $G$.

\begin{lemma}\label{completetree}
For every graph $G$, if $n-1 \geq \ell \geq tw(G)$ then $G$ admits an $\ell$-complete tree decomposition.
\end{lemma}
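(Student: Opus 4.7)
The starting point is a tree decomposition $T$ of $G$ of width at most $\ell$, which exists because $\ell \ge tw(G)$; the hypothesis $\ell \le n-1$ ensures that bags of size $\ell+1$ can actually be realized inside $V(G)$. The base case $n=\ell+1$ is trivial: the single-node tree with bag $V(G)$ is vacuously $\ell$-complete, so I may assume $n\ge\ell+2$. My plan is then to massage $T$ through three kinds of local operations, each of which preserves the tree-decomposition property and never increases the width.

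The first operation, \emph{contraction}, identifies the endpoints of a tree edge $uv$ with $B_u\subseteq B_v$ into a single node carrying the bag $B_v$ (and inheriting all other tree-neighbors of $u$). The second, \emph{padding}, enlarges a bag $B_u$ of size strictly less than $\ell+1$ by moving some vertex $y\in B_v\setminus B_u$ into $B_u$, where $v$ is any tree-neighbor of $u$; the subtree of $y$ stays connected because $y$ was already in $B_v$ and $u$ is adjacent to $v$. The third, \emph{subdivision}, applies to an edge $uv$ with $|B_u|=|B_v|=\ell+1$ and $|B_u\cap B_v|=\ell+1-r<\ell$: writing $B_u\setminus B_v=\{x_1,\dots,x_r\}$ and $B_v\setminus B_u=\{y_1,\dots,y_r\}$, I replace the edge by a path $u, w_1, \dots, w_{r-1}, v$ with $B_{w_i}=(B_u\setminus\{x_1,\dots,x_i\})\cup\{y_1,\dots,y_i\}$. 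Each intermediate bag has size $\ell+1$ and shares $\ell$ vertices with each neighbor; the subtree of every vertex stays connected because $x_i$ occupies the prefix of the new path up to $w_{i-1}$, $y_i$ occupies the suffix from $w_i$ onwards, and every vertex of $B_u\cap B_v$ appears throughout.

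I would then apply contractions and paddings alternately until neither is possible. Termination comes from the nonnegative integer monovariant $\Phi(T)=2\cdot|V(T)|+\sum_u(\ell+1-|B_u|)$: each contraction lowers $|V(T)|$ by one while not increasing the total deficit, and each padding lowers the deficit by one without touching $|V(T)|$, so $\Phi$ strictly decreases. At the stopping point, no edge has one bag contained in the other, and every bag whose size is still less than $\ell+1$ must be isolated in $T$; hence either $T$ has become a single node, which combined with width $\le\ell$ and the hypothesis $\ell\le n-1$ forces $n=\ell+1$ and we fall back on the base case, or every bag already has size exactly $\ell+1$. In the latter situation, applying subdivision once to each remaining edge whose intersection is smaller than $\ell$ finishes the construction.

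The main obstacle is the interaction between the first two operations: after padding $B_u$ with some vertex $y$, it may happen that the original $B_v$ is now contained in the enlarged $B_u$, so a fresh contraction is forced, which in turn might leave another bag needing to be padded, and so on. The global monovariant $\Phi$ is the cleanest way I see to rule out an infinite ping-pong. Once that phase stabilizes, the subdivision step is a pure one-shot cleanup, since its output bags all have size $\ell+1$ and swap exactly one vertex with each neighbor, and therefore cannot re-create the situations handled by contraction or padding.
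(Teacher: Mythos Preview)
Your proof is correct and takes a genuinely different route from the paper's. The paper argues by induction on $n$: it removes a baby $x$ from a leaf of a reduced tree decomposition, applies the induction hypothesis to $G\setminus x$ to obtain an $\ell$-complete decomposition $T'_c$, and then reattaches $x$ by adding a single new leaf to a bag of $T'_c$ that already contains $B_u\setminus x$. Your argument instead works directly on a width-$\ell$ decomposition and normalizes it via three local operations (contraction, padding, subdivision), with the monovariant $\Phi$ guaranteeing termination of the first phase. The inductive approach is shorter and dovetails nicely with how $\ell$-complete decompositions are later consumed in the paper (by peeling off babies one at a time); your approach is more overtly algorithmic and avoids the recursion entirely, the monovariant being a clean way to dispose of the contraction/padding interaction you identify as the main obstacle. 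Either argument suffices for the lemma as stated.
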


\begin{proof}
By definition of $tw(G)$, the graph $G$ admits a tree decomposition $T$ whose every bag has size at most $\ell+1$. We can assume w.l.o.g. that no bag is contained in another in $T$: if two adjacent nodes $u,v$ in $T$ verify $B_u \subseteq B_v$, then the edge $uv$ can be contracted.

We build inductively an $\ell$-complete tree decomposition $T_c$ of $G$ such that every bag of $T$ is contained in a bag of $T_c$.

If $n=\ell+1$, then the tree decomposition consisting of a single node with bag $V(G)$ is $\ell$-complete.

If $n\geq \ell+2$, then $T$ has at least two nodes since every vertex is contained in at least one bag. Let $u$ be a leaf of $T$ and $v$ be the neighbor of $u$. Since no bag is contained in another in $T$, there is a vertex $x$ in $B_u \setminus B_v$. Note that $x$ is a baby. Otherwise the subset of nodes whose bags contain $x$ would not be a subtree of $T$ since $x\notin B_v$ and $v$ is the unique neighbor of $u$ in $T$. Let $T'=T[V \setminus x]$.

By induction hypothesis, $G \setminus x$ admits an $\ell$-complete tree decomposition $T'_c$ where every bag of $T'$ is contained in a bag of $T'_c$. So some node $w$ of $T'_c$ satisfies $(B_u \setminus x) \subseteq B_w'$. Since $|B'_w|= \ell+1 \geq tw(G)+1$, some vertex $y$ of $B_w'$ is not in $B_u$. We consider $T_c$ built from $T'_c$ by adding a leaf $u'$ attached on $w$ whose bag is $(B_w' \cup  x ) \setminus y$. Then $T_c$ is an $\ell$-complete tree decomposition of $G$ with the required property with regards to $T$.
\end{proof}

Let $T$ be a complete tree decomposition. Note that $|B_u \setminus B_v|=|B_v \setminus B_u|=1$ for every edge $uv$. Two vertices $x,y \in V$ are \textit{$T$-parents} if there are two adjacent nodes $u,v$ of $T$ such that $x=B_u \setminus B_v$, and $y=B_v \setminus B_u$. In other words, vertices $x$ and $y$ are $T$-parents if the subtree of the nodes containing $x$ in their bags and the subtree of the nodes containing $y$ in their bags do not intersect, but are connected by an edge ($uv$ in this case). When no confusion is possible we will use the term parents instead of $T$-parents. Also note that the notion of parents is symmetric: if $x$ is a parent of $y$ then $y$ is a parent of $x$.
The \emph{family relation} is the transitive closure of the parent relation. A \emph{family} is a class of the family relation. In Fig.~\ref{fig:completedecomp}, the families are $\{x_1,x_4,x_5,x_6,x_8\}$ and $\{x_2,x_3,x_7,x_9\}$. The partition of $V$ induced by the families is called the \emph{family partition}. In Fig.~\ref{fig:completedecomp}, vertices $x_2$ and $x_3$ are parents.

\begin{remark}\label{rem:family}
The family partition of any $\ell$-complete tree decomposition exists and is unique. Each family contains exactly one vertex in every bag. So there are $\ell+1$ families, which are stable sets.
\end{remark}

\begin{proof}
By induction on $T$. If $T$ has a single node $u$, then no vertex has a parent. So each family is a single vertex. 
Assume $T$ has at least two nodes. Let $u$ be a leaf of $T$ and $v$ be its adjacent node. Note that the family partitions of $T$ are the extensions of those of $T \setminus u$. The vertices $x=B_u \setminus B_v$ and $y=B_v \setminus B_u$ are parents and $y$ is the unique parent of $x$.
Since $u$ is a leaf of $T$, $T\setminus u$ is still $\ell$-complete by Remark~\ref{rem:subtreelcomplete}. 

By induction, $B_v$ contains exactly one vertex of every family of the unique family partition of $T[V \setminus u]$. Since $B_u =(B_v \cup x) \setminus y$, and since $y$ is the unique parent of $x$, we can uniquely extend the partition by adding $x$ in the family of $y$. Besides, in $B_u$ there is exactly one vertex of each family.
\end{proof}

\subsection{Coherent colorings}

Let $T$ be an $\ell$-complete tree decomposition of a graph $G$. A coloring $\alpha$ is \emph{$X$-coherent} (relatively to $T$) if for every $x,y \in X$ that are parents, $\alpha(x)=\alpha(y)$ and for every bag $B$ and every $x \in X\cap B$, only $x$ is colored with $\alpha(x)$ in $B$.
Note that since parents are non-adjacent in the graph by Remark~\ref{rem:family}, coherent colorings can be proper. Note that a $V(G)$-coherent coloring is a proper $\ell$-coloring which is in some sense canonical: given two $V$-coherent colorings, they differ only up to a color permutation. Our recoloring algorithm consists in transforming any coloring into such a canonical coloring. Then we can transform any $V$-coherent coloring into any other using the recoloring algorithm of the clique.

The subsection is organized as follows. First we define the notion of merging. Then we prove that the distance between $V(G)$-coherent colorings is linear. And we finally provide some recoloring lemmas regarding $(V \setminus B_u)$-coherent colorings. All these tools will be used in Section~\ref{sec:obtainvcoherent}.

Let $G$ be a graph and $\mathcal{C}$ be a stable set. The \emph{merged graph} on $\mathcal{C}$ is the graph $G$ where vertices of $\mathcal{C}$ are identified into a vertex $z$ and $xz$ is an edge if there exists a vertex $y \in \mathcal{C}$ such that $xy$ is an edge. A coloring $\gamma$ of the merged graph can be \emph{extended} on the whole graph by coloring every vertex of $\mathcal{C}$ with $\gamma(z)$. For any stable sets $\mathcal{C}_1,\mathcal{C}_2,\cdots,\mathcal{C}_p$ with $\mathcal{C}_i \cap \mathcal{C}_j = \emptyset$ for any $i \neq j$, the \emph{merged graph} on $\mathcal{C}_1,\mathcal{C}_2,\cdots,\mathcal{C}_p$ is the graph obtained from $G$ by merging successively $\mathcal{C}_1, \mathcal{C}_2 \cdots \mathcal{C}_p$.

\begin{remark}\label{rem:merged}
Let $\mathcal{C}$ be a stable set. Let $\alpha', \beta'$ be two colorings of the merged graph on $\mathcal{C}$ and $\alpha,\beta$ be their extended colorings. If $\alpha'$ can be transformed into $\beta'$ by recoloring each vertex at most $t$ times, then $\alpha$ can be transformed into $\beta$ by recoloring every vertex at most $t$ times.
\end{remark}
\begin{proof}
We just have to follow the recoloring process of $\alpha'$ into $\beta'$. If the recolored vertex is not $\mathcal{C}$, then perform the same recoloring in the extended graph. Otherwise, recolor the vertices of $\mathcal{C}$ (one after the other) into the new color of $\mathcal{C}$ in the extended graph. All the intermediate colorings are proper since $\mathcal{C}$ is a stable set.
\end{proof}
Remark~\ref{rem:merged} formalizes an easy fact: when there is a set of a same color, then we can consider it as a single vertex.

\begin{remark}\label{rem:mergeddec}
Let $T$ be an $\ell$-complete tree decomposition, let $C$ be a stable set of $G$ that belongs to the same family and let $G'$ be the merged graph on $C$. If $T[C]$ is connected, then for $T'$ the tree obtained from $T$ by contracting any edge $uv$ such that $B_u$ and $B_v$ differ only on vertices of $C$, $T'$ is an $\ell$-complete tree decomposition of $G'$.
\end{remark}
\begin{proof}
By induction on $|C|$.

If $|C|=1$, $G'$ and $T'$ are actually $G$ and $T$, so the result holds.

If $|C|=2$, since $T[C]$ is connected, for $C=\{a,b\}$, $a$ and $b$ are parents. Let $G'$ be the merged graph on $\{a,b\}$ and $T'$ be the tree obtained from $T$ by contracting any edge $uv$ such that $B_u$ and $B_v$ differ only on vertices of $\{a,b\}$. The set of nodes of $T$ whose bags contain $a$ or $b$ forms a non-empty subtree of $T$, so the same holds of $T'$. Then the tree $T'$ is a tree decomposition of $G'$. The tree decomposition is still $\ell$-complete since every bag of $T'$ corresponds to (at least) a bag of $T$, and any two adjacent nodes in $T'$ correspond to two adjacent nodes in $T$.

If $|C|\geq 3$, we consider a leaf $a$ of $T[C]$. We apply the induction hypothesis with $C \setminus \{a\}$: we obtain the graph $G'$ and the $\ell$-complete tree decomposition $T'$. Let $b$ be the vertex of $G'$ corresponding to $C \setminus \{a\}$. Note that $b$ belongs to the same family as $a$. Since $T[C]$ is connected, so is $T'[\{a,b\}]$. We apply the induction hypothesis on $G'$ and $T'$ with $\{a,b\}$. The resulting graph $G''$ and $\ell$-complete tree decomposition $T''$ are also the graph and tree decomposition that would have been obtained by merging directly on $C$.
\end{proof}

\begin{lemma}\label{compatibleson}
Let $k \geq tw(G)+2$. If every $k$-coloring of $G$ can be transformed into a $V$-coherent coloring with at most $f(n)$ recolorings, then the $k$-recoloring diameter of $G$ is at most $2\cdot(f(n)+n)$.
\end{lemma}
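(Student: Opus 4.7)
The plan is to concatenate three subsequences: bring $\alpha$ to a $V$-coherent coloring $\alpha'$, bring $\beta$ to a $V$-coherent coloring $\beta'$, and then morph $\alpha'$ into $\beta'$. By hypothesis each of the two ``bringing'' phases costs at most $f(n)$ recolorings, so the remaining task is to show that any two $V$-coherent colorings are at distance at most $2n$. Reversing the second sequence then yields a path from $\alpha$ to $\beta$ of length at most $2f(n)+2n = 2(f(n)+n)$.

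For the middle phase, I would use that a $V$-coherent coloring assigns a single color to each family, and that by Remark~\ref{rem:family} there are exactly $\ell+1\le tw(G)+1$ families, each of which is a stable set and each of which contributes exactly one vertex to every bag. In particular, within any $V$-coherent coloring the $\ell+1$ families carry $\ell+1$ pairwise distinct colors. Merging each family into a single vertex (applying the merging construction iteratively), I obtain a graph $G^\star$ on $\ell+1$ vertices; $\alpha'$ and $\beta'$ descend to colorings $\alpha^\star,\beta^\star$ of $G^\star$ that happen to assign pairwise distinct colors to all $\ell+1$ vertices.

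Since $k\ge tw(G)+2\ge (\ell+1)+1$, I can view $\alpha^\star$ and $\beta^\star$ as two $k$-colorings of the complete graph $K_{\ell+1}$ on the vertex set of $G^\star$ (note $K_{\ell+1}$ is a supergraph of $G^\star$, so any recoloring sequence that is proper on $K_{\ell+1}$ is in particular proper on $G^\star$). Lemma~\ref{lemma:clique} then produces a recoloring sequence from $\alpha^\star$ to $\beta^\star$ in which every vertex of $G^\star$ is recolored at most twice. Iterating Remark~\ref{rem:merged} (once per family), this lifts to a recoloring sequence from $\alpha'$ to $\beta'$ in $G$ in which every vertex of $G$ is recolored at most twice, so of total length at most $2n$.

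Combining the three phases gives the announced bound. The only subtle point I expect is the justification that the final phase can actually apply Lemma~\ref{lemma:clique}: one has to check both that the merged graph $G^\star$ genuinely has $\ell+1$ vertices (which follows from Remark~\ref{rem:family}) and that the intermediate colorings produced by that lemma remain proper in $G$ after the reverse-merging (which follows because each family is a stable set, so recoloring its vertices one after another to a common new color keeps the coloring proper throughout, exactly as in Remark~\ref{rem:merged}).
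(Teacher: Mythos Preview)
Your proof is correct and follows essentially the same approach as the paper: both use the triangle inequality via two $V$-coherent colorings, merge the families into a graph on $tw(G)+1$ vertices, treat it as a clique, and apply Lemma~\ref{lemma:clique} together with Remark~\ref{rem:merged} to bound the middle phase by $2n$. The paper's argument is slightly terser but identical in substance.
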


\begin{proof}
Let $\alpha, \beta$ be two $k$-colorings of $G$. By assumption, there are two $V$-coherent colorings $\gamma_\alpha$ and $\gamma_\beta$ such that $d(\alpha,\gamma_\alpha) \leq f(n)$ and $d(\beta,\gamma_\beta) \leq f(n)$. 

Let us prove that $d(\gamma_\alpha,\gamma_\beta)\leq 2n$. By definition, all the vertices of a same family are colored identically in $\gamma_\alpha$. The same holds for $\gamma_\beta$. Let $G'$ be the merged graph where every family is identified into a same vertex. By Remark~\ref{rem:family}, the family partition is unique, so both $\gamma_\alpha$ and $\gamma_\beta$ are extensions of $\gamma_\alpha'$ and $\gamma_\beta'$ colorings of $G'$. Every pair of vertices of $G'$ have distinct colors in $\gamma_\alpha$ (and in $\gamma_\beta$). So $G'$ can be considered as a clique on $tw(G)+1$ vertices (since there are $tw(G)+1$ families). 
Lemma~\ref{lemma:clique} and Remark~\ref{rem:merged} ensure that $d(\gamma_\alpha,\gamma_\beta)\leq 2n$.
Since $d(\alpha,\beta) \leq d(\alpha,\gamma_\alpha)+d(\gamma_\alpha,\gamma_\beta)+d(\gamma_\beta,\beta)$, Lemma~\ref{compatibleson} holds.
\end{proof}

Let us first make some observations for the two forthcoming lemmas. Let $T$ be a tree and $u$ be a node of $T$. We can consider that $T$ is rooted on $u$. Then $w$ is the \emph{father of $v$} if $vw$ is an edge and $v$ is not in the connected component of $u$ in $T \setminus w$. The \emph{tree rooted on $v$}, denoted by $T_v$, is the connected component of $v$ in $T \setminus w$. Note that if $T$ is an $\ell$-complete tree decomposition, then so is $T_v$ for any $v$. Let us first prove some stability on $(V \setminus B_u)$-coherent colorings. This slightly technical lemma is at the core of the recoloring algorithm.

\begin{lemma}\label{claim:everywhere}
Let $T$ be an $\ell$-complete tree decomposition rooted at $u$ and let $v$ be a node of $T$ distinct from $u$. Let $\alpha$ be a $(V \setminus B_u)$-coherent coloring where color $a$ does not appear in $B_u$.\\
If a vertex of $B_v$ is colored with $a$, every bag of $T_v$ contains a vertex colored with $a$.
\end{lemma}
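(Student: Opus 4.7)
The plan is to proceed by induction on the distance in $T_v$ from $v$ to an arbitrary node $w \in T_v$, showing at each step that $B_w$ contains a vertex of color $a$. The base case $w = v$ is exactly the hypothesis. For the inductive step, I would let $p$ be the neighbor of $w$ on the path from $v$ to $w$ in $T_v$, apply the induction hypothesis to obtain a vertex $y \in B_p$ with $\alpha(y) = a$, and then try to push a color-$a$ vertex into $B_w$.

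Since $T$ is $\ell$-complete, $B_p$ and $B_w$ differ on exactly one vertex on each side: write $\{z\} = B_p \setminus B_w$ and $\{z'\} = B_w \setminus B_p$, so that $z$ and $z'$ are parents by definition. If $y \neq z$, then $y \in B_w$ and we are done immediately. Otherwise $y = z$, and I would invoke the $(V \setminus B_u)$-coherence of $\alpha$ applied to the parent pair $(z,z')$ to conclude $\alpha(z') = \alpha(z) = a$, yielding $z' \in B_w$ as the desired color-$a$ vertex.

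The real obstacle is checking that coherence is actually applicable, i.e., that both $z$ and $z'$ belong to $V \setminus B_u$. For $z = y$ this is immediate, since $\alpha(z) = a$ and color $a$ is assumed not to appear in $B_u$. For $z'$ it requires a short tree-decomposition argument: if $z'$ were in $B_u$, then the set of nodes whose bag contains $z'$ would be a subtree of $T$ containing both $u$ and $w$, hence containing every node on the $u$-to-$w$ path in $T$. Since $T$ is rooted at $u$ and $w \in T_v$, that path runs through $v$ and then through $p$, so $z'$ would lie in $B_p$, contradicting $z' \in B_w \setminus B_p$. Once $z,z' \in V \setminus B_u$ is established, $(V \setminus B_u)$-coherence closes the induction and the claim follows.
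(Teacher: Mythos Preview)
Your proof is correct and follows essentially the same approach as the paper's. The paper argues by contradiction, choosing a node $w$ in $T_v$ closest to $v$ whose bag lacks color $a$ and deriving a contradiction from the coherence of the parent pair across the edge $ww'$; your induction on the distance from $v$ unfolds exactly the same step, with your $p$ playing the role of the paper's $w'$ and your case split on $y = z$ making explicit what the paper leaves implicit.
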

\begin{proof}
Assume by contradiction that a node $w$ of $T_v$ does not contain a vertex colored with $a$ in its bag. Choose $w$ in such a way $w$ is as near as possible from $v$ in $T$. Then the father $w'$ of $w$ contains a vertex $y$ of color $a$. 
The vertex $y$ is not in $B_u$ since $\alpha(y)=a$. Let $z= B_w \setminus B_{w'}$. We have $z \notin B_u$ since $z \notin B_{w'}$ and $w'$ is the father of $w$. So $y$ and $z$ are parents. Since $\alpha$ is $(V\setminus B_u)$-coherent, we have $\alpha(y)=\alpha(z)$. But $\alpha(y)=a$, a contradiction.
\end{proof}

\begin{lemma}\label{pfclaim1}
Let $k,\ell$ be two integers with $k \geq \ell+2$.
Let $T$ be an $\ell$-complete tree decomposition rooted on $u$. Let $\alpha$ be a $(V \setminus B_u)$-coherent $k$-coloring where some color $a$ does not appear in $B_u$. 

Then by recoloring every vertex of $V \setminus B_u$ at most once, we can obtain a $(V \setminus B_u)$-coherent $k$-coloring where no vertex is colored with $a$.
\end{lemma}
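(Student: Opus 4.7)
The plan is to proceed by induction on $|V \setminus B_u|$; the base case $V = B_u$ is vacuous.

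For the inductive step, pick any leaf $v$ of $T$ with $v \neq u$, let $v'$ be its parent in $T$, and set $x := B_v \setminus B_{v'}$ (the baby at $v$) and $y := B_{v'} \setminus B_v$ (its $F$-parent). By Remark~\ref{rem:lcomplete}, $T \setminus v$ is $\ell$-complete, and $\alpha|_{V \setminus \{x\}}$ inherits $(V \setminus B_u)$-coherence in $T \setminus v$ (with $a \notin \alpha(B_u)$ unchanged). The induction hypothesis applied to $(G \setminus x, T \setminus v, \alpha|_{V \setminus \{x\}})$ yields a recoloring sequence of $G \setminus x$, each vertex recolored at most once, ending at a $(V \setminus B_u)$-coherent $a$-free coloring $\alpha'$.

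The crucial observation is that, since each vertex is recolored at most once and the final coloring avoids $a$, no inductive step ever assigns color $a$. Hence, when $\alpha(x) = a$, every inductive recoloring in $G$ assigns a color different from $\alpha(x) = a$ and cannot clash with $x$ on any shared bag; the sequence runs verbatim in $G$. I finish by one recoloring of $x$: if $y \in B_u$, pick any $c \in [k] \setminus (\alpha'(B_v \setminus \{x\}) \cup \{a\})$, which is non-empty since $|\alpha'(B_v \setminus \{x\})| \leq \ell$ and $k \geq \ell + 2$. If $y \in V \setminus B_u$, set $c := \alpha'(y)$: by the $(V \setminus B_u)$-coherence of $\alpha'$ applied to $B_{v'}$, $c \neq a$ and $c \notin \alpha'(B_v \setminus \{x\}) = \alpha'(B_{v'} \setminus \{y\})$, so the recoloring is valid and $x, y$ now share color $c$ in accordance with their $F$-parent relation.

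The case $\alpha(x) \neq a$ is argued in the same spirit, except that an inductive recoloring of a $G$-neighbor $w$ of $x$ to the color $\alpha(x)$ could conflict with $x$. When $y \in V \setminus B_u$, $(V \setminus B_u)$-coherence of $\alpha$ gives $\alpha(x) = \alpha(y)$, and $x$'s target color is $\alpha'(y)$; I schedule a single recoloring of $x$ to $\alpha'(y)$ at an appropriate moment, after $y$ has been recolored (or at the end if $y$ is never recolored). When $y \in B_u$, there is no parent constraint on $x$, and if needed I preprocess $x$ with one recoloring to a color in $[k] \setminus (\alpha(B_v \setminus \{x\}) \cup \{a\})$ before applying the inductive sequence. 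The hardest part of the proof, I expect, is the detailed scheduling in the case $\alpha(x) \neq a$ with $y \in V \setminus B_u$: one must verify that the inductive sequence can be arranged so that at every intermediate step the coloring remains proper on $G$ (not merely on $G \setminus x$), which rests on maintaining $(V \setminus B_u)$-coherence throughout and leveraging the identity $\alpha(x) = \alpha(y)$ to control the colors a $G$-neighbor of $x$ can carry at the moment it is recolored.
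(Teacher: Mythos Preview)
Your induction on $|V \setminus B_u|$ by peeling off a single baby $x$ handles the case $\alpha(x)=a$ cleanly, but the case $\alpha(x)\neq a$ has a genuine gap that you yourself flag without resolving.

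Take the sub-case $y\in V\setminus B_u$ (the sub-case $y\in B_u$ has the same issue unless $v'=u$). A vertex $z\in B_v\setminus\{x\}=B_{v'}\setminus\{y\}$ may be adjacent to $x$ in $G$ yet \emph{not} adjacent to $y$, since sharing a bag does not force adjacency. The inductive hypothesis gives no control on intermediate colorings beyond properness in $G\setminus x$; in particular nothing forbids such a $z$ from being recolored to $\alpha'(z)=\alpha(y)=\alpha(x)$ \emph{before} $y$ is touched, which conflicts with $x$. Your proposed fix---recolor $x$ to $\alpha'(y)$ right after $y$ is recolored---does not work either: at that moment some $z'\in B_v\setminus\{x\}$ that is not adjacent to $y$ and has not yet been recolored may already carry $\alpha(z')=\alpha'(y)$, blocking the move on $x$. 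The invariant you hope for, namely $(V\setminus B_u)$-coherence at every intermediate step, is not part of the inductive hypothesis; adding it would require reproving the lemma under a strictly stronger statement, and it is not clear this can be done along your leaf-peeling scheme.

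The paper sidesteps the scheduling problem by a different induction. It first merges each $(V\setminus B_u)$-family into a single vertex, then for every son $v$ of $u$ applies the induction to the subtree $T_v$ \emph{rooted at $v$}. Thus the vertices of $B_v$---the entire boundary of $T_v$ with the rest of $G$---are never recolored during the call, and no interaction with outside vertices is possible. For the problematic son where $y=B_v\setminus B_u$ has $\alpha(y)=a$, the merging guarantees that $y$ lies in every bag of $T_v$; deleting $y$ yields an $(\ell-1)$-complete decomposition on which the lemma is applied with $k-1$ colors to eliminate a fresh color $b$, after which $y$ is recolored once to $b$. This top-down, root-bag-fixing strategy is precisely what your bottom-up, leaf-removal strategy lacks.
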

\begin{proof}
Let us prove it by induction on $|T|$. Given a subset of vertices $X$ of $V$, two vertices are in the same \emph{$X$-family} if they are in the closure relation of the parents relation restricted to the vertices of $X$ (note that if $X=V$ the two notions coincide). For every ($V \setminus B_u$)-family $C$, we merge all the vertices of $C$. Since the coloring is $(V \setminus B_u)$-coherent, it makes sense: all the vertices of $C$ were colored identically. By Remark~\ref{rem:mergeddec}, the resulting tree decomposition is still $\ell$-complete. By Remark~\ref{rem:merged}, recoloring the resulting graph suffices: the resulting coloring will be $(V\setminus B_u)$-coherent in any case.

If $\ell=0$, then the graph has no edge. Color $a$ can be eliminated by recoloring every vertex at most once since the graph is a stable set.

If $\ell >0$, for every son $v$ of $u$, we proceed as follows. If a vertex of $B_v$ is colored with $a$ then it is necessarily the vertex $B_v \setminus B_u = y$. We consider two cases depending on whether $\alpha(y)=a$.
\begin{itemize}
\item Assume $\alpha(y) \neq a$. Since $T_v$ is smaller than $T$, by the induction hypothesis we can recolor $B_{T_v}$ without recoloring any vertex of $B_v$ (and then of $B_u$), so as to obtain a $(B_{T_v}\setminus B_v)$-coherent coloring with no vertex of color $a$.
\item Assume $\alpha(y)=a$. By Remark~\ref{rem:family}, every node of $T_v$ contains exactly one vertex of the same family as $y$. Since we merged on the families of $V\setminus B_u$, vertex $y$ belongs to every node of $T_v$. Thus $T_v[B_{T_v}\setminus\{y\}]$ is an $(\ell-1)$-complete tree decomposition of $G[B_{T_v}\setminus\{y\}]$, and color $a$ does not appear on $B_{T_v}\setminus\{y\}$. We remove color $a$ from the set of $k$ available colors. Since $\alpha$ is $(V \setminus B_u)$-coherent, it is in particular $(B_{T_v}\setminus B_v)$-coherent. Take a color $b$ that does not appear on $B_v$. We apply the induction hypothesis on $T_v[B_{T_v}\setminus\{y\}]$ with $\ell-1$, $k-1$ and color $b$. Every vertex of $T_{B_v}\setminus B_v$ has been recolored at most once, and by assumption, neither color $a$ nor $b$ appears on this set. We recolor $y$ in $b$. Since $y \in B_v$, every vertex of $T_{B_v}$ has been recolored at most once, and color $a$ does not appear in $T_v$.
\end{itemize}

By Remark~\ref{rem:merged}, the resulting coloring is \textit{de facto} $(V\setminus B_u)$-coherent. Every vertex has been recolored at most once, no vertex of $B_u$ has been recolored, and no vertex is colored in $a$.
\end{proof}

\subsection{Obtaining a $V$-coherent coloring}\label{sec:obtainvcoherent}
In order to prove Theorem~\ref{thm:tw}, Lemma~\ref{compatibleson} ensures that we just have to transform any coloring into a $V$-coherent coloring in at most $n^2$ steps. We will consider the vertices one after the other and we will try to obtain a coloring that is coherent when restricted to the vertices already treated. The recoloring algorithm is detailed in the following lemma:

\begin{lemma}\label{lem:mainlemma}
Let $T$ be a $tw(G)$-complete tree decomposition. For every $\ell$-coloring $\alpha$ of $G$, there is a $V$-coherent coloring $\gamma_\alpha$ such that $d(\alpha,\gamma_{\alpha}) \leq n^2$.
\end{lemma}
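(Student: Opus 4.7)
I would prove the lemma by induction on $n$, assuming (as is needed for the theorem to be meaningful) that $\ell \geq tw(G)+2$. In the base case $n = tw(G)+1$, the decomposition $T$ consists of a single node $u$ with $B_u = V$, so every family is a singleton and a $V$-coherent coloring is exactly one that assigns pairwise distinct colors to all $n$ vertices. Since $\ell \geq n+1$, Lemma~\ref{lemma:clique} (applied to $B_u$ viewed as a clique in the chordal supergraph) provides such a target reachable from $\alpha$ in at most $2n \leq n^2$ steps.

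For the inductive step, root $T$ at an arbitrary node $u$, pick a leaf $v \neq u$, and let $v'$ be $v$'s father, $y = B_v \setminus B_{v'}$ the baby, and $x = B_{v'} \setminus B_v$ the parent of $y$. By Remark~\ref{rem:lcomplete}, $T \setminus v$ is a $tw(G)$-complete tree decomposition of $G \setminus y$. My plan is to apply induction to $(G \setminus y, T \setminus v)$ to obtain a $V(G\setminus y)$-coherent coloring $\gamma'$ in at most $(n-1)^2$ steps, and then extend by recoloring $y$ to $\gamma'(x)$ in one step. The extension is correct because $N(y) \subseteq B_v \setminus y = B_{v'} \setminus x$, and $V(G\setminus y)$-coherence forces the $tw(G)$ vertices of $B_{v'} \setminus x$ to carry exactly the family colors distinct from $\gamma'(x)$.

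The main obstacle is that the recursion produces a sequence valid on $G \setminus y$ but not necessarily on $G$: some intermediate step could recolor a vertex $w \in N(y)$ to $y$'s current color, breaking properness on $G$. My plan to deal with this is to first recolor $y$ to a buffer color $c \notin \alpha(B_{v'})$, which exists since $|B_{v'}| = tw(G)+1 \leq \ell - 1$. This is a single valid step because $N(y) \subseteq B_{v'} \setminus x$ and $c$ is absent from $B_{v'}$. I would then strengthen the inductive statement to: given any bag $B^*$ of $T$ and any color $c \notin \alpha(B^*)$, one can produce the target $V$-coherent coloring along a sequence that keeps $c$ absent from $B^*$ throughout. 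Applying this with $B^* = B_{v'}$ and the buffer color $c$ guarantees the recursion never assigns $c$ to a neighbor of $y$, so the sequence is valid on all of $G$.

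Proving the strengthening is the key technical step. The idea is that whenever a natural recoloring step would place color $c$ on $B^*$, one first invokes Lemma~\ref{pfclaim1} (applied to the sub-decomposition rooted at a bag containing $B^*$) to clear $c$ from the surrounding region, then performs the step safely. Because the budget $1 + (n-1)^2 + 1 \leq n^2$ is tight, the strengthening must be designed so that the forbidden-color detours amortize to $O(n)$ per recursion level rather than $O(n)$ per recursion step; making this amortization work cleanly, while keeping Lemma~\ref{pfclaim1}'s precondition of $(V\setminus B_{root})$-coherence available each time it is invoked, is the hardest part of the argument I would expect.
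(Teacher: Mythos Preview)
Your induction-on-$n$ outline identifies the right obstruction but does not close it. The strengthened hypothesis you propose --- reach a $V$-coherent coloring while keeping one fixed color $c$ off one fixed bag $B^*$ --- does not compose with itself. To prove it for $G$, you buffer the baby $y$ with some color $c_1\notin\alpha(B_{v'})$ and then recurse on $G\setminus y$; but now the recursion must simultaneously keep $c$ off $B^*$ (the constraint you are trying to establish) \emph{and} keep $c_1$ off $B_{v'}$ (to protect $y$). A one-constraint hypothesis cannot deliver both, and iterating this mismatch down the recursion accumulates one new $(\text{bag},\text{color})$ pair per level. Your fallback --- patch each violation by calling Lemma~\ref{pfclaim1} --- also fails: that lemma requires the current coloring to be $(V\setminus B_{\mathrm{root}})$-coherent on the relevant subtree, and at an arbitrary intermediate step of your top-down recursion no such coherence is available (the vertices you have not yet removed sit in no controlled state). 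You flag exactly this as ``the hardest part,'' but it is not a detail to be filled in; it is the whole lemma.

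The paper's argument runs in the opposite direction and this is what makes Lemma~\ref{pfclaim1} applicable at every step. Rather than deleting a baby and recursing, it keeps the full graph and grows a set $F_i$ of ``treated'' vertices, maintaining the invariant that the current coloring $c_i$ is $F_i$-coherent. At step $i{+}1$ it picks a baby $x$ of $T[V\setminus F_i]$ in a leaf bag $B_u$; the already-treated vertices hanging off $u$ form residual subtrees whose vertex set lies in $F_i\cup B_u$, so $c_i$ restricted there is $(V'\setminus B_u)$-coherent and Lemma~\ref{pfclaim1} legitimately clears a free color $a$ from them. Then $x$ and all its family members in $F_i$ are recolored to $a$, restoring $F_{i+1}$-coherence. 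Each step touches only $F_{i+1}$, at most twice per vertex, and summing gives the $n^2$ bound. The point is that the invariant is precisely tailored to feed Lemma~\ref{pfclaim1}; your top-down scheme never produces such an invariant, which is why the amortization you hope for does not materialize.
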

\begin{proof}
The proof consists in a recoloring algorithm. We treat vertices one after the other, considering vertices that have at most one parent not yet treated. In other words, we treat babies of the remaining tree-decomposition. Our invariant will ensure that, when $X$ is treated, the current coloring is $X$-coherent. When a new vertex $x$ is treated, we just have to transform the current coloring in order to obtain a $(X \cup \{ x \})$-coherent coloring. At the end of the procedure, the whole vertex set is treated, and then the current coloring is $V$-coherent.

Let us now describe more formally the invariants. The set $F_i$ represents treated vertices at step $i$. Initially, no vertex is treated, so $F_0=\emptyset$. The coloring $c_i$ is the current $\ell$-coloring at the end of step $i$. Initially the coloring is $\alpha$, so $c_0=\alpha$. The invariants at the end of step $i$ are:
\begin{enumerate}[(i)]
\item\label{p1} $F_{i-1} \subset F_i \subseteq V$, and $|F_i|=i$.
\item\label{p2} $T[V \setminus {F_i}]$ is a $\min(tw(G),|V\setminus F_i|)$-complete tree decomposition of $G\setminus F_i$.
\item\label{p3} $c_i$ is an $\ell$-coloring of $G$ obtained from $c_{i-1}$ by recoloring vertices of $F_i$ at most twice.
\item\label{p4} $c_i$ is $F_i$-coherent.
\end{enumerate}

We proceed iteratively on $i$ from $1$ to $n$. Let $u$ be a leaf of $T[V \setminus F_i]$ and $x$ be a baby contained in $B_u$. In Fig.~\ref{fig:completedecomp}, the node $u$ is a leaf of $T[V \setminus \{ x_4,x_5,x_6,x_7,x_9 \}]$ and $x_3$ is the corresponding baby. We want to add $x$ in $F_i$. Denote by $F_{i+1}$ the set $F_i \cup x$. By Remark~\ref{rem:lcomplete} and since $x$ is a baby, $T[V \setminus F_{i+1}]$ is a complete tree decomposition. Thus (\ref{p1}) and (\ref{p2}) are immediately verified. The following consists in proving (\ref{p3}) and (\ref{p4}).


A \emph{residual component} of $T[V\setminus F_i]$ is a connected component of $T \setminus T[V \setminus F_i]$. Informally, a residual component is a subtree of the tree decomposition containing at least one treated vertex. A \emph{residual component of $u$} is a residual component containing a node adjacent to $u$. Note that vertices which appear in a bag of such a residual component are included in $F_i \cup B_u$.
In Fig.~\ref{fig:completedecomp}, subtrees $T_1^*$ and $T_2^*$ are the residual components of $u$ in $T[V \setminus \{ x_4,x_5,x_6,x_7,x_9 \}]$.

Let $F$ be the union of the residual components on $u$. And let $T^*$ be the subtree induced by $F \cup u$ rooted on $u$. Let us consider the graph $G'$ restricted to the vertices of $T^*$. Let $a$ be a color which does not appear in $B_u$. Note that the coloring $c_i$ restricted to $G'$ is $(V(G') \setminus B_u)$-coherent. Indeed, the vertices of $V(G') \setminus B_u$ are in $F_i$, and the coloring $\alpha$ is $F_i$-coherent. 

By applying Lemma~\ref{pfclaim1}, $c_i$ can be transformed into a $(B_{T^*}\setminus B_u)$-coherent coloring of $G'$ where no vertex is colored with $a$. Every vertex of $F_i$ is recolored at most once. Note that since vertices of $B_u$ are not recolored, the obtained coloring is proper on the whole graph.
Since $B_u$ is still in the clique tree, it means that no vertex of $B_u$ is in $F_i$. So if two vertices of $F_i$ are parents, either they are both in $B_{T^*}$ or none is in $B_{T^*}$. So the resulting coloring is $F_i$-coherent.

At this point, the color $a$ has disappeared from $T_u$ and the coloring is $V \setminus F_i$-coherent. So all the members of the family of $x$ that are in $F_i$ can be recolored with $a$, as the vertex $x$ itself. Every vertex is recolored at most once. 
Finally every vertex is recolored at most twice. So the resulting coloring $c_{i+1}$ satisfies conditions (\ref{p3}) and (\ref{p4}).

This operation is repeated until $F_i=V$, that is, $i=|V|$. When the last vertex is treated the coloring is $V$-coherent by (\ref{p4}). It follows from (\ref{p3}) that to recolor $G$ from $\alpha$ to $\gamma_\alpha=c_{n}$, it suffices to recolor each vertex $x$ at most $2\cdot (n-i+1)$, where $i$ is the smallest such that $x \in F_i$. Thus, on the whole, it suffices to make $2\cdot \frac{n(n+1)}{2}=n^2+2n$ recolorings.
Actually, the analysis can be slightly improved. Indeed, the vertex $x_i$ treated at step $i$ is recolored at most once (since vertices of $B_u$ are not recolored in Lemma~\ref{pfclaim1}). Therefore, every vertex is recolored at most $1+2 \cdot(n-i)$ times, which finally ensures that $d(\alpha,\gamma_\alpha) \leq n^2$.
\end{proof}

Note that the proof is totally algorithmic and runs in polynomial time (for a fixed number of colors).


\section{Cographs and distance-hereditary graphs}\label{sect:CDHproof}

In this Section, the notion of modules will be essential. Given a graph $G$, a subset $X$ of vertices is a \emph{module} if, for every vertex $y \notin X$ then $y$ is adjacent to either every vertex of $X$ or none of them. A subset $X$ of vertices is a \emph{strong module} if $|X|\geq 2$ and for every module $M$ of $G$, $M$ and $X$ are either disjoint or contained one in the other. In this section, the total order on the colors is the standard order on the integers.

\begin{remark}\label{rem:threemodules}
For every graph, any three strong modules $M_1, M_2$ and $M_3$ such that $M_1 \subsetneq M_2 \subsetneq M_3$ satisfy $\chi(M_1)>\chi(M_3)$.
\end{remark}

\begin{proof}
Assume by contradiction that three strong modules $M_1,M_2,M_3$ such that $M_1\subsetneq M_2 \subsetneq M_3$ satisfy $\chi(M_1)=\chi(M_2)=\chi(M_3)$. There is no edge between any vertex $x$ of $M_3 \setminus M_1$ and $y \in M_1$. Indeed otherwise $x$ must be connected to every vertex of $M_1$ since $M_1$ is a module.
And thus $\chi(M_3) \geq \chi (G[M_1 \cup x])= \chi(M_1)+1$, a contradiction. 

Therefore $M_3 \setminus M_1$ is a module of $G$. Indeed, they have the same neighborhood in $V \setminus M_3$ since $M_3$ is a module and the same neighborhood in $M_1$ since there is no edge beween $M_3 \setminus M_1$ and $M_1$. Though, $M_3 \setminus M_1$ strictly intersects with $M_2$, a contradiction since $M_2$ is a strong module.
\end{proof}

\begin{remark}\label{rem:moduledepth}
For every graph $G$, every vertex $x$ belongs to at most $2 \cdot \chi(G)$ distinct strong modules.
\end{remark}

\begin{proof}
Since every strong module containing $x$ intersects any other on $x$, all of them are included the ones into the others by definition of strong modules. Since the chromatic number of a module is at most $\chi(G)$, Remark~\ref{rem:threemodules} ensures that $x$ is contained in at most $2 \cdot \chi(G)$ strong modules.
\end{proof}

\subsection{Cographs and quasi-cographs}

A graph $G=(V,E)$ is a \emph{cograph}~\cite{lerchs71} if it does not contain induced paths of length $4$. Equivalently, a graph is a cograph if:
\begin{itemize}
 \item $G$ is a single vertex.
 \item Or $V$ can be partitionned into $V_1,V_2$ such that $G[V_1]$ and $G[V_2]$ are cographs and there is no edge between any vertex of $V_1$ and any vertex of $V_2$.
 \item Or $V$ can be partitionned into $V_1,V_2$ such that $G[V_1]$ and $G[V_2]$ are cographs and every vertex of $V_1$ is adjacent to every vertex of $V_2$ (such an operation is called a \emph{join}).
\end{itemize}
Note in particular that in the two last constructions, both subsets $V_1$ and $V_2$ are modules. In addition, in the second (resp. third) construction, $V_1$ is a strong module of $G$ iff it is not the disjoint union (resp. join) of two cographs. It follows that in a cograph, all the strong modules are a single vertex or the disjoint union of strong modules, or the join of strong modules. The cographs have been introduced in 1971 and have been extensively studied (see~\cite{brandstadt99}).

A coloring $c$ of a cograph $G$ is said to be \emph{modular} if for every strong module $M$ of $G$ that is the disjoint union of $p$ strong modules $M_1,\ldots,M_p$, the coloring $c$ uses a set $S$ of exactly $\chi(G[M])$ colors on the set $M$, and every $M_i$ is colored in $c$ with the first $\chi(G[M_i])$ colors in $S$. \emph{Partially joining} a clique $D$ to a graph means taking a non-empty subset of $D$ and joining it to the graph.

A \emph{quasi-cograph rooted in $H$}, where $H$ is a cograph, is the graph obtained by partially joining $p$ cliques $D_1,\cdots,D_p$ to $p$ modules $C_1,\cdots,C_p$ of $H$ such that no two $C_i$'s strictly overlap. By partially joining, we mean that some vertices of the clique are totally connected to the module and others are not connected.
 Note that $p$ may be equal to $0$, so cographs are in particular quasi-cographs rooted in themselves. Note also that two $C_i$'s might be identical. In other words, a quasi-cograph is obtained from a cograph by adding pending cliques on non-strictly overlapping modules of the cograph $H$ (see Figure~\ref{fig:quasico}). Given a subset $X$ of vertices of $H$, the \emph{vertices rooted in $X$} are the vertices of the quasi cograph which are not in $H$ and connected to a vertex of $X$. By extension, a modular coloring of a quasi-cograph is a coloring of it whose restriction to the underlying cograph is modular.
 
 \begin{figure}[!h]
 \centering
\begin{tikzpicture}[scale=1.5]
\tikzstyle{blacknode}=[draw,circle,fill=black,minimum size=6pt,inner sep=0pt]

\draw (0,0) node[blacknode] (u) {}
-- ++ (-60:1cm) node[blacknode] (v1) {};

\draw (u)
-- ++ (180+60:1cm) node[blacknode] (v2) {};

\draw (-0.5,-2) node[blacknode] (w1) {};
\draw (-0.5,-3) node[blacknode] (w3) {};
\draw (0.5,-2) node[blacknode] (w2) {};
\draw (0.5,-3) node[blacknode] (w4) {};

\draw (v2)
-- ++ (180+45:1cm) node[blacknode] (x1) {};
\draw (x1)
-- ++ (-90:1cm) node[blacknode] (x2) {};
\draw (x1)
-- ++ (180:1cm) node[blacknode] (x3) {};

\draw (x2) edge node {} (x3);

\foreach \i in {1,2,3}
\foreach \j in {2,3,4}
\draw (w\i) edge node {} (w\j);

\foreach \i in {2,3}
\draw (x\i) edge node {} (v2);

\foreach \i in {1,2}
\foreach \j in {1,2}
\draw (w\i) edge node {} (v\j);

\draw[thick,loosely dotted] (0,-0.5) circle [x radius=1cm, y radius=0.75cm];

\draw[thick,densely dotted] (-0.5,-0.85) circle [x radius=0.25cm, y radius=0.25cm];
\draw[thick,densely dotted] (0,-0.85) circle [x radius=1cm, y radius=0.25cm];
\end{tikzpicture}
  \caption{An example of a quasi-cograph $G$. The three top vertices form a cograph $H$, and the graph $G$ can be seen as a quasi-cograph rooted in $H$, with $C_i$'s as circled in the figure. Note that there may be multiple choices for $H$.}
  \label{fig:quasico}
 \end{figure}
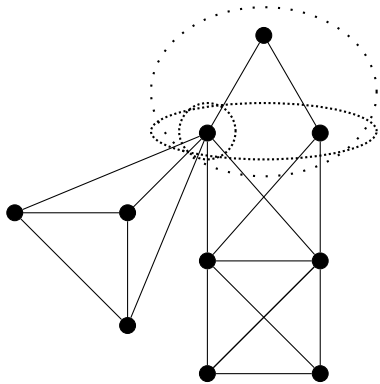

In order to prove the bounds on the mixing numbers of cographs and distance-hereditary graphs, we will first prove a few recoloring properties on quasi-cographs. Before stating the recoloring results, let us recall interesting properties of modules in cographs.

\begin{claim}\label{cl:trivialstrong}
A cograph $C$ that admits no non-trivial strong module is either a clique or a stable set.
\end{claim}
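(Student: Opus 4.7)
The plan is a direct case analysis. The case $|V(C)|\le 1$ holds trivially, since $C$ is then both a clique and a stable set. For $|V(C)|\ge 2$, the defining construction of cographs forces $C$ to be a disjoint union or a join of two smaller cographs, hence either $C$ or $\overline{C}$ is disconnected. Because the module relation is manifestly preserved under complementation (and a cograph's complement is a cograph), strong modules coincide in $C$ and $\overline{C}$; meanwhile complementation swaps ``clique'' and ``stable set''. I may therefore assume, without loss of generality, that $C$ itself is disconnected with connected components $C_1,\ldots,C_p$ for some $p\ge 2$, and it then suffices to conclude that $C$ is a stable set.

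If every $V(C_i)$ is a singleton, then $C$ has no edges and we are done. Otherwise, some component — say $V(C_1)$ — contains at least two vertices, and the core of the argument is to show that $V(C_1)$ is then a \emph{non-trivial} strong module of $C$, contradicting the hypothesis. The set $V(C_1)$ is plainly a module of $C$ (every vertex outside lies in another component and is thus non-adjacent to all of $V(C_1)$), has size at least $2$, and is strictly contained in $V(C)$ since $p\ge 2$; so the only point to verify is strongness.

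Suppose for contradiction that some module $M$ of $C$ overlaps $V(C_1)$, i.e.\ that there exist $x\in M\cap V(C_1)$, $y\in V(C_1)\setminus M$, and $z\in M\setminus V(C_1)$. Since $z$ lies in a component different from $C_1$, it is non-adjacent to every vertex of $V(C_1)$, in particular to $y$. Since $y\notin M$ and $M$ is a module, $y$ has uniform adjacency to all of $M$, so non-adjacency to $z$ forces non-adjacency to $x$. Applying the same reasoning to every pair in $(V(C_1)\cap M)\times(V(C_1)\setminus M)$, we conclude that no edge of $C$ crosses this partition of $V(C_1)$ into two nonempty parts, contradicting the connectedness of $C_1$. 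The only step where one should be careful is the initial symmetry reduction — preservation of strong modules and of cographhood under complementation — but both are classical (the latter being Seinsche's theorem), so no real obstacle is anticipated.
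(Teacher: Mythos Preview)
Your proof is correct and takes a slightly different route from the paper's. The paper argues by induction on $|C|$: in the disjoint-union case it first observes that a non-trivial strong module of $C_1$ (or $C_2$) would remain a non-trivial strong module of $C$, so by the inductive hypothesis each $C_i$ is already a clique or a stable set; then, if some $C_i$ is a clique on at least two vertices, it uses that clique structure to check that $C_i$ is a strong module of $C$ (a module strictly overlapping a clique component would have to be entirely inside it), yielding the contradiction. The join case is treated separately but symmetrically. Your argument avoids induction altogether: you show directly that any connected component on at least two vertices is a strong module, using only connectedness of the component (not any prior knowledge that it is a clique) to rule out an overlapping module, and you collapse the join case into the union case via the complementation symmetry rather than rerunning the argument. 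Both approaches produce the same contradiction; yours is a bit more economical, while the paper's inductive version makes the clique/stable-set structure of the pieces explicit along the way.
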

\begin{proof}
By induction on $|C|$.

\begin{itemize}
\item If $C$ is a single vertex, the result holds.
\item If $C$ is the disjoint union of two cographs $C_1$ and $C_2$. A non-trivial strong module of $C_1$ (resp. $C_2$) is a strong module of $C$, thus by the induction hypothesis, both $C_1$ and $C_2$ are either a clique or a stable set. Since $C$ is not a stable set, assume w.l.o.g. that $C_1$ is a clique (of size at least $2$). Then $C_1$ is a strong module of $C$ (indeed a module containing a strict subset of $C_1$ does not contain a vertex of $C_2$), a contradiction.
\item If $C$ is the join of two cographs $C_1$ and $C_2$. Similarly, we can assume that $C_1$ is a stable set, which implies that $C_1$ is a strong module of $C$, a contradiction. 
\end{itemize}
\end{proof}

We start with an easy lemma:
\begin{lemma}\label{lem:quasioptimal}
Let $G=(V,E)$ be a quasi-cograph rooted in $H$ and $k \geq \chi(G)+1$. Every $k$-coloring $\alpha$ of $G$ can be recolored into a modular coloring $\gamma_\alpha$ with $\gamma_\alpha(H) \subset \alpha(H)$ by recoloring every vertex at most $4 \cdot k \cdot \chi(G)$ times, while using no color on $H$ beside $\alpha(H)$.
\end{lemma}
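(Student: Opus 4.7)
The plan is to proceed by induction on $|V(H)|$, following the top-level cotree decomposition of $H$. When $|V(H)|=1$, the coloring $\alpha$ is vacuously modular on $H$ and we set $\gamma_\alpha=\alpha$. Otherwise, write $H=H_1 \star \cdots \star H_r$ with $\star \in \{\cup,\oplus\}$ and $r \geq 2$: each $V(H_j)$ is a strong module of $H$, and since the $C_i$'s cannot strictly overlap, every pendant attachment $C_i$ is either contained in some $V(H_j)$ (an \emph{internal} pendant) or a union of several $V(H_j)$'s (an \emph{external} pendant).

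The algorithm then has three stages. First, for each $j$ let $G_j$ be the sub-quasi-cograph made of $H_j$ and its internal pendants; by induction, $\alpha|_{G_j}$ can be turned into a modular coloring $\gamma_j$ using only colors of $\alpha(H_j) \subseteq \alpha(H)$ on $H_j$. Second, we synchronize colors at the top level: if $\star=\oplus$ the $\gamma_j(H_j)$'s are already pairwise disjoint (since $\alpha$ was proper on the join), and no action is needed; if $\star=\cup$, choose $S \subseteq \alpha(H)$ of size $\chi(H)=\max_j\chi(H_j)$ to be $\gamma_{j^\star}(H_{j^\star})$ for some max-chromatic $H_{j^\star}$, and for each $j\neq j^\star$ relabel $H_j$ via the unique order-preserving bijection from $\gamma_j(H_j)$ to the first $\chi(H_j)$ elements of $S$. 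The order-preserving choice is essential: it preserves, recursively inside $H_j$, the ``first $\chi(G[M_i])$ colors of $S_M$'' condition required of a modular coloring. Finally, each external pendant $D_i$ is recolored via Lemma~\ref{lemma:clique} using the palette $\{1,\ldots,k\} \setminus \alpha(C_i)$ to obtain a proper coloring.

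For the per-vertex bound, Remark~\ref{rem:moduledepth} tells us every vertex of $H$ lies in at most $2\chi(G)$ strong modules, and at each nesting level the relabeling (a permutation of at most $\chi(G[M])$ color classes) costs at most $2k$ recolorings per vertex via a Lemma~\ref{lemma:clique}-style argument applied to the clique of color classes, yielding the total $4k\chi(G)$ claimed. The buffer color needed for each relabeling exists inside $\alpha(H)$ whenever $\chi(H_j) < \chi(H)$, because then $|\alpha(H)|\geq\chi(H)>\chi(H_j)=|\gamma_j(H_j)|$ guarantees at least one color of $\alpha(H)$ currently unused on $H_j$; the borderline case $\chi(H_j)=\chi(H)$ forces $\gamma_j(H_j)=\alpha(H)$ and hence the identity relabeling, so no buffer is needed there.

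The main obstacle will be the interplay between the top-level order-preserving relabeling and the modular structure deep inside each $G_j$: a naive permutation would destroy the ``first $\chi$ colors'' condition at every lower level, so one must verify that applying the unique order-preserving bijection to $\gamma_j$ yields a coloring still modular throughout $H_j$ — which follows because order-preservation of the bijection commutes with the ``first'' operator at every scale. A secondary technicality is ensuring that no color outside $\alpha(H)$ is used on $H$ even transiently during the Lemma~\ref{lemma:clique}-style permutations, which is handled by the buffer analysis above together with the observation that, under disjoint union, colors of other $H_{j'}$'s can be safely borrowed as buffers since $H_j$ and $H_{j'}$ are non-adjacent.
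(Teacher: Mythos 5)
Your top-down cotree induction is a genuinely different organization from the paper's bottom-up argument (which repeatedly selects a smallest strong module that is not yet modularly colored and fixes it in place), and your observation that an order-preserving relabeling commutes with the ``first $\chi$ colors'' condition at every scale is correct. However, there is a genuine gap: you never confront the partially joined pendant cliques \emph{during} the recoloring of $H$, and that interaction is the central difficulty of the lemma. When your stage 2 relabels $H_j$ into the first $\chi(H_j)$ colors of $S$, some vertex $x$ of a module $C_i\subseteq H_j$ must be recolored into a target color $\alpha_t\in S$; after stage 1 the coloring is proper, so the colors on the part $B$ of the pendant clique $D_i$ joined to $C_i$ are only known to be disjoint from the \emph{current} colors of $C_i$, not from the target ones. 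Hence the single-vertex recoloring of $x$ into $\alpha_t$ may be improper, and deferring the pendants to a final Lemma~\ref{lemma:clique} cleanup cannot repair this, since properness is required at every intermediate coloring. The paper's proof spends most of its length on exactly this point: before recoloring a vertex $u_{i,j}$ of $H$ into a smaller color $\alpha_j$, it first evacuates any conflicting pendant vertex $v$ (possibly after evacuating a further pendant vertex $w$), using $k\ge\chi(G)+1$ to find free colors on the relevant cliques, and then restores the pendants. This choreography is also where the factor $k$ in the bound actually comes from (each vertex of the module under repair moves to a strictly smaller color at each round, hence at most $k$ rounds); your attribution of a ``$2k$ per vertex'' cost to a Lemma~\ref{lemma:clique}-style permutation is inconsistent, since that lemma recolors each vertex at most twice.

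A secondary error sits in your buffer analysis: the claim that $\chi(H_j)=\chi(H)$ forces $\gamma_j(H_j)=\alpha(H)$ is false, because $\alpha$ may use far more than $\chi(H)$ colors on $H$; in that borderline case $\gamma_j(H_j)$ and $S$ are two possibly distinct $\chi(H)$-subsets of $\alpha(H)$, so the relabeling is generally not the identity and a buffer is still needed. Moreover, a valid buffer color must avoid not only $H_j$ but also the pendant cliques attached to subsets of $H_j$ and, at deeper recursion levels, the join-neighbours of the module being relabelled; your analysis considers none of these, and fixing it leads back to precisely the case analysis the paper carries out.
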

\begin{proof}
Then $G$ is a graph obtained by partially joining $p$ cliques $D_1,\cdots,D_p$ to $p$ modules $C_1,\cdots,C_p$ of $H$ such that no two $C_i$'s strictly overlap.

Let $M$ be a smallest strong module of $H$ such that $H[M]$ is not modularly colored. 
Thus every maximal strong module of $H$ strictly included in $M$ is modularly colored: we consider these strong modules to behave as a clique (where each color class corresponds to a single vertex) which is possible by Remark~\ref{rem:merged} (all the vertices of the same color class are merged which ensure that any recoloring is still a modular coloring for these modules). If $M$ was the join of $p$ strong modules, it is now a clique and its coloring is necessarily modular. Then $M$ was the disjoint union of cliques $K_1, \ldots, K_p$. 
Since $M$ is a strong module, every $C_j$ that intersects $M$ but does not contain it is strictly included in $M$. Since $C_j$ is a module, if $C_j$ is strictly included in $M$ then $C_j$ is either strictly included in some $K_i$, or is a union of $K_i$'s.

Let $C$ be a minimal $C_i$ included in $M$ that is not minimally colored, or $M$ itself if there is no such $C_i$. The $C_i$'s strictly included in $C$ are considered to behave as cliques (again, where each color class is merged into a single vertex). We denote $\{K'_i\}_i$ the resulting (at least two) disjoint cliques in $C$. Note that all the cliques partially joined to a strict subset of $C$ are partially joined to some $C_i$ strictly included in $C$ and thus to a subset of some $K'_j$. Let $\alpha_1,\alpha_2,\ldots,\alpha_{q}$ be the colors used on $C$, \emph{in that order} (i.e. with $\alpha_1 < \alpha_2 < \cdots < \alpha_q$). For every $j$ from $1$ to $\chi(H[M])$, for each $K'_i$ with $|K'_i|\geq j$, we pick a vertex $u_{i,j}$ that is colored with a color larger than $\alpha_{|K'_i|}$, if any.

We then recolor if needed the pending cliques on some subset of $K'_i$ in order to remove the color $\alpha_j$ from the neighborhood of $u_{i,j}$, as follows. For a given clique $K''$ partially joined to a subset $A$ of $K'_i$, since $K''$ is partially joined to $A$, let $B$ be the subset of $K''$ that is joined to $A$. If $u_{i,j}\not\in A$ or the color $\alpha_j$ does not appear on $B$, we do not recolor anything. If $u_{i,j} \in A$ and there is a vertex $v \in B$ whose color is $\alpha_j$. Then, since we have $k \geq \chi(G)+1$ and $A \cup B$ is a clique, there is a color $\gamma$ that does not appear on $A \cup B$. If there is a vertex $w$ of $K'' \setminus B$ that is colored in $\gamma$, we use the same argument to say that there is a color $\delta$ that does not appear on $K''$. Now we recolor the vertices as follows: $w$ in $\delta$, $v$ in $\gamma$, $u_{i,j}$ and $w$ in $\alpha_j$, and finally $v$ in the initial color of $u_{i,j}$ (we synchronize with other cliques partially attached to a subset of $K'_i$ in order to match the recoloring of $u_{i,j}$). Note that neither $v$ nor $w$ will be recolored at a future step: the color of $v$ is now greater than $\alpha_{|K'_i|}$ and that of $w$ now appears on $A$ (note that $u_{i,j}$ will not be recolored at a future step).

Eventually, when step $\chi(H[M])$ is done, the only vertices that were recolored are in $C$ or in a pending clique joined to a subset of $C$, and each was recolored at most twice.

We repeat this operation until $M$ is modularly colored. Since each vertex in $M$ that is recolored is recolored into a smaller color, each vertex is recolored at most $k$ times. Then the vertices in the pending cliques are recolored at most $2k$ times. 

Now we repeat these operations until $G$ is modularly colored. By Remark~\ref{rem:moduledepth}, every vertex will be recolored at most $4\cdot k \cdot \chi(G)$ times.
\end{proof}

It now remains to prove that a quasi-cograph can be recolored from any modular coloring to any other. We say that two colorings $c_1$ and $c_2$ of a graph \emph{agree up to translation} on a set $X$ if, for $\alpha_1,\cdots,\alpha_q$ the colors used by $c_1$ on $X$, and $\beta_1,\cdots,\beta_r$ the colors used by $c_2$, both in that order, we have $q=r$ and for any vertex $u \in X$, it holds that $c_1(u)=\alpha_i$ iff $c_2(u)=\beta_i$.

\begin{lemma}\label{lem:quasicographmodular}
Let $G=(V,E)$ be a quasi-cograph rooted in $H$ and $k\geq \chi(G)+1$. For any two modular colorings $\gamma_1$ and $\gamma_2$ of $G$, and any color $a \not\in \gamma_1(H)$, there exists a modular coloring $c$ of $G$ such that $c(H)=\gamma_1(H)$, $\gamma_2$ and $c$ agree up to translation on $V$, and $G$ can be recolored from $\gamma_1$ to $c$ by recoloring each vertex at most $6 \cdot \chi(G)$ times, while using no color on $H$ beside $\{a\}\cup \gamma_1(H)$.
\end{lemma}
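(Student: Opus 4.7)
The plan is to construct the target coloring $c$ explicitly and then to transform $\gamma_1$ into $c$ by a top-down traversal of the cotree of $H$, using $a$ as a swap buffer.

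To build $c$: since $\gamma_1|_H$ and $\gamma_2|_H$ are both modular colorings of the cograph $H$, they share the same cotree skeleton---at every strong module $M$ of $H$ each uses exactly $\chi(G[M])$ colors, and at every disjoint-union node the children's color sets are nested initial segments of the parent's. I would fix the monotone bijection $\phi$ from $\gamma_2(V)$ to a target color set that sends $\gamma_2(H)$ onto $\gamma_1(H)$ (and maps the remaining colors of $\gamma_2(V\setminus H)$ monotonically to integers outside $\gamma_1(H)$), and set $c = \phi \circ \gamma_2$. Modularity of $c$ follows because a monotone color relabeling preserves the initial-segment condition; the requirements $c(H) = \gamma_1(H)$ and ``$c$ agrees with $\gamma_2$ up to translation on $V$'' then hold by construction.

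The recoloring from $\gamma_1$ to $c$ would proceed level by level on the cotree, maintaining the invariant that after processing depth $d$, the current coloring uses, on every strong module of depth at most $d$ in $H$, exactly the color set prescribed by $c$. At the root the invariant is immediate since $\gamma_1(H) = c(H)$. At a disjoint-union node the invariant propagates automatically because both colorings force the children to use the nested initial segments of the parent's color set. At a join node, however, the parent's color set may be partitioned among the children differently by the current coloring than by $c$; I would realize the transition between these two partitions by decomposing the color movement into cycles and rotating each cycle through $a$---push one block into $a$, slide the subsequent blocks into their slots, and close the cycle. Each vertex in the node's module is recolored at most three times per level. By Remark~\ref{rem:moduledepth}, every vertex lies in at most $2 \cdot \chi(G)$ strong modules of $H$, so every vertex is recolored at most $6 \cdot \chi(G)$ times in total, as required.

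The main obstacle is synchronizing these swaps with the pending cliques. When a color class on a module $M \subseteq H$ is rotated through $a$, the vertices of a pending clique $D_i$ partially joined to $M$ must be recolored in lockstep to preserve properness, and they may themselves currently carry the target color of the swap. I would resolve these conflicts exactly in the style of the final step of Lemma~\ref{lem:quasioptimal}: since $k \geq \chi(G)+1$, each affected clique admits a locally free color, which enables a bounded local exchange that clears the blocker before $M$'s class is moved. Because the only color ever introduced on $H$ beyond $\gamma_1(H)$ during the process is $a$ itself, the side constraint that $H$ stays colored within $\{a\} \cup \gamma_1(H)$ is automatically maintained throughout.
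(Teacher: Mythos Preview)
Your top-down traversal has a genuine gap at the disjoint-union step. You claim that at a disjoint-union node the invariant ``propagates automatically because both colorings force the children to use the nested initial segments of the parent's color set.'' That is true for $c$ and for $\gamma_1$, but not in general for the \emph{current} coloring once you have performed swaps at some ancestor join node. When you rotate color classes at a join node $M$ with children $N_1,\dots,N_p$, each $N_i$ undergoes a bijection of colors $\sigma_i:\text{current}(N_i)\to c(N_i)$. Modularity of the coloring restricted to $N_i$ survives this only if $\sigma_i$ is monotone; an arbitrary cycle decomposition of the global permutation on $c(M)$ does not guarantee that. If some $\sigma_i$ is non-monotone, then inside $N_i$ the initial-segment property at its disjoint-union descendants is destroyed, and your propagation argument at deeper levels no longer applies. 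The fix is easy to state---at each join node take for each child the unique monotone bijection to its target color set, then realize the resulting global permutation by rotating its cycles through $a$---but it is precisely the missing ingredient in your write-up.

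The paper sidesteps this difficulty by going bottom-up rather than top-down. It repeatedly picks a \emph{smallest} strong module $M$ on which the current coloring and $\gamma_2$ do not yet agree up to translation; all proper strong submodules of $M$ then already agree, so their color classes can be merged and $M$ becomes a genuine clique, to which Lemma~\ref{lemma:clique} applies directly with the spare color $a$. Because the children are already frozen as single vertices, no question of preserving their internal modularity arises. The bookkeeping (at most two recolorings on $M$ plus at most three on the attached pending cliques per module, times the $2\chi(G)$ bound from Remark~\ref{rem:moduledepth}) is the same as yours. A secondary issue: your explicit construction of $c$ via a single monotone $\phi$ on all of $\gamma_2(V)$ need not exist when $\gamma_2(H)$ does not sit at the bottom of $\gamma_2(V)$ and $\gamma_1(H)$ contains small colors; the paper avoids committing to $c$ on the pending cliques until the very end.
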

\begin{proof}
Then $G$ is a graph obtained by partially joining $p$ cliques $D_1,\cdots,D_p$ to $p$ modules $C_1,\cdots,C_p$ of $H$ such that no two $C_i$'s strictly overlap.
We build step-by-step a modular coloring $c$ of $G$ that agrees up to translation with $\gamma_2$ on $V$ and does not use $a$ on the $D_i$'s. We consider $G$ to be initially colored in $\gamma_1$. We recolor if needed the $D_i$'s so as to remove the color $a$ from them: this can be done by recoloring each vertex at most once. We set $c_0$ to be the resulting coloring.

At step $j$, let $M$ be a smallest strong module on which $c_j$ and $\gamma_2$ do not agree up to translation. Remember that $c_j$ and $\gamma_2$ are modular colorings, thus if $M$ is the disjoint union of strong modules $N_1, \ldots, N_p$, then the colors used by $c_j$ on each $N_i$ are exactly the $\chi(H[N_i])$ smallest colors used by $c_j$ on $M$, and the same stands for $\gamma_2$. Combined with the fact that $c_j$ and $\gamma_2$ agree up to translation on each $N_i$, it yields that $c_j$ and $\gamma_2$ necessarily agree up to translation on $M$. Then $M$ cannot be the disjoint union of strong modules, nor can it be a single vertex, so $M$ must be the join of $p$ strong modules $N_1, \ldots, N_p$. By assumption, $c_j$ and $\gamma_2$ agree up to translation on each $N_i$. Then we consider each $N_i$ to behave just as a clique (each color class corresponds to a single vertex). Now $M$ is a clique. Let $d$ be the coloring of $M$ that is the translation of $\gamma_2$ to the color set used by $c_j$ on $M$, while maintaining the order on the colors (i.e. the vertex of $M$ colored with the smallest color in $\gamma_2$ is colored in the smallest color in $d$). By Lemma~\ref{lemma:clique}, we can recolor $M$ from $c_j$ to $d$ by recoloring each vertex of $M$ at most twice, while using no new color beside $a$. We recolor as necessary the $D_i$'s joined to some $C_i \subseteq M$ in the meanwhile, and then remove if necessary the color $a$ from them: each vertex need only be recolored at most three times. Let $c_{j+1}$ be the resulting coloring of $G$. 

By Remark~\ref{rem:moduledepth}, each vertex belongs to at most $2\chi(G)$ strong modules. Then we are sure to obtain a modular coloring that agrees up to translation with $\gamma_2$ on $V(H)$ at step $2\chi(G)$ at most. Consequently, recoloring each vertex $6 \chi(G)$ times is enough. Then we recolor as necessary the cliques partially joined to $H$. This can be done by recoloring each vertex at most twice.
\end{proof}

Let us now use it to prove Theorem~\ref{thm:distancehereditary}. 

\subsection{Distance-hereditary graphs}\label{sec:disthered}

Two vertices $x,y$ are \emph{false twins} if $N(x) = N(y)$. In particular $xy$ is not an edge. By symmetry, two vertices $x,y$ are \emph{true twins} if $N(x) \cup x = N(y) \cup y$.
Let us first define the distance-hereditary graphs. A graph is \emph{distance-hereditary} if for every pair of vertices $x,y$ and for every pair of paths $P_1$ and $P_2$ from $x$ to $y$, the length of $P_1$ equals the length of $P_2$. 
\begin{theorem}\cite{HammerM90}\label{theorem:HammerM}
A distance-hereditary graph can be built from a unique vertex graph by operations of false twins, true twins and adding a pendant vertex.
\end{theorem}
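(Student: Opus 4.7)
My plan is to establish this characterization by strong induction on $n = |V(G)|$. The base case $n = 1$ is the trivial single-vertex graph. For the inductive step the key claim I need is that every distance-hereditary graph $G$ on $n \geq 2$ vertices contains a vertex $x$ which is either pendant in $G$, or has a (true or false) twin in $G$. Once this is established, removing $x$ yields $G \setminus x$, which remains distance-hereditary since any two induced paths in $G \setminus x$ are also induced paths in $G$; by induction $G \setminus x$ admits a construction from a single vertex using the three listed operations, and re-introducing $x$ via the corresponding operation completes the construction of $G$.

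The entire content therefore lies in the existence claim. I would attack it by picking an arbitrary vertex $v$ and performing a breadth-first search, giving distance layers $L_0 = \{v\}, L_1, \ldots, L_d$. For $y \in L_d$ I write $N_\uparrow(y) := N(y) \cap L_{d-1}$ and $N_=(y) := N(y) \cap L_d$; since $L_d$ is the last layer, $N(y) = N_\uparrow(y) \cup N_=(y)$ and $N_\uparrow(y) \neq \emptyset$ by connectivity. I would then choose $y \in L_d$ minimizing $|N(y)|$ (with ties broken by minimizing $|N_\uparrow(y)|$). If $|N(y)| = 1$ then $y$ is pendant and we are done; otherwise I will hunt for a twin of $y$ inside $L_d$.

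The key structural step, which I expect to be the main obstacle, is producing some $y' \in L_d \setminus \{y\}$ that is a twin of $y$. The argument uses the distance-hereditary hypothesis layer by layer: if two vertices $a, b \in N_\uparrow(y)$ failed to share the same set of neighbors in $L_{d-2}$, one could exhibit two induced $v$--$y$ paths of different lengths, one through $a$ and one through $b$, contradicting the definition; iterating this observation upward forces $N_\uparrow(y)$ to behave like a module all the way up to $v$, and in particular forces the existence of a vertex $y' \in L_d$ with $N_\uparrow(y') = N_\uparrow(y)$. A symmetric argument routed through $N_=(y)$ (using induced paths that re-enter $L_d$) shows that the adjacencies of $y$ and $y'$ inside $L_d$ also match, so depending on whether $yy' \in E$ we obtain true or false twins. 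The delicate point is the case analysis needed to rule out a partial match of the two neighborhoods: for each potential witness of a difference, the distance-hereditary property must be invoked to produce the contradicting pair of induced paths, and the choice of $y$ as a minimizer is what guarantees that such a witness actually lies in the graph.
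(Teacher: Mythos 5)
This theorem is quoted in the paper from the literature (Hammer--Maffray; the pruning-sequence characterization also goes back to Bandelt--Mulder), so there is no in-paper proof to compare against; your attempt has to stand on its own. Your reduction of the theorem to the single claim ``every connected distance-hereditary graph on $n\geq 2$ vertices has a pendant vertex or a pair of twins'' is the right move, and the induction wrapped around it is fine ($G\setminus x$ is an induced subgraph, so it is again distance-hereditary, and re-adding $x$ is one of the three operations). The problem is that this claim \emph{is} the theorem, and your sketch of it does not go through as stated.

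Concretely, your plan is to pick $y\in L_d$ of minimum degree and produce a twin $y'\in L_d\setminus\{y\}$. This already fails on $C_4$: rooting the BFS at $v$, the last layer is a single non-pendant vertex $y$, whose only twin is the root $v$ itself, so no witness exists in $L_d$. (The theorem survives because a twin pair exists elsewhere, here in $L_1$, but your argument as written produces nothing.) More generally, the case where $y$ is isolated in $G[L_d]$ with $|N(y)|\geq 2$ is exactly the hard case of the Bandelt--Mulder proof, and it is the one your sketch waves at with ``iterating this observation upward forces $N_\uparrow(y)$ to behave like a module.'' Even granting the layering facts you allude to (vertices in a common component of $G[L_d]$ share their up-neighborhood, and each such component induces a cograph, which yields twins when the component has at least two vertices), the singleton-component case requires a separate argument that locates a pendant vertex or a twin pair possibly in an earlier layer, and your minimality choice of $y$ does not supply it. Also, your claimed contradiction ``two induced $v$--$y$ paths of different lengths'' needs to be exhibited explicitly: both natural paths through $a$ and through $b$ descend one layer per step and so have the same length $d$; the actual argument must compare an induced detour such as $c\,a\,y\,b$ against a shorter induced $c$--$b$ path, and verifying inducedness is where the work lies. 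As it stands the key lemma is asserted, not proved.
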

An order for which, at every step a vertex can be added by one of these three operations is called a \emph{construction order}.
Remark that a cograph is a graph in which every connected component is a distance-hereditary graph with diameter at most 2~\cite{brandstadt99}.

Given a rooted tree $T$ (i.e. a tree whose edges are oriented and such that there exists a node $r$ such that any node can be reached by a path starting on $r$), for any two nodes $u,v$ of $T$, we say $v$ is the \emph{son} of $u$ if there is an arc from $u$ to $v$ in $T$. The node $u$ is a \emph{leaf} if $u$ has no son. 

\begin{lemma}\label{lemme:treeparthereditary}
Every distance-hereditary graph $G=(V,E)$ admits an oriented tree $T$ such that:
\begin{itemize}
 \item A bag $B_u$ is associated to every node $u$ of $T$.
 \item Every vertex of $G$ appears in exactly one bag.
 \item For every node $u$, $G[B_u]$ is a cograph.
 \item For every arc $uv$ of $T$, there exists a module $M_{u,v}$ of $G[B_u]$ such that $B_v$ is completely (and only) connected to $M_{u,v}$ and such that for any two sons $v,w$ of $u$, $M_{u,v}$ and $M_{u,w}$ are either disjoint or included one in the other.
\end{itemize}
\end{lemma}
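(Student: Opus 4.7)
My plan is to proceed by induction on $|V(G)|$, using the construction characterization of distance-hereditary graphs given by Theorem~\ref{theorem:HammerM}. The base case $|V(G)|=1$ is handled by a single-node tree with bag $V(G)$. For the inductive step, I let $v$ be the last vertex added in a construction order and apply the induction hypothesis to $G\setminus v$ to obtain a tree $T'$ satisfying the four required properties. I then modify $T'$ into a tree $T$ for $G$ by a case analysis on how $v$ was added.

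If $v$ is a pendant vertex attached to some $u$, I would create a fresh leaf $\ell$, attach it as a child of the unique node $w$ with $u\in B_w$, and set $B_\ell=\{v\}$ with $M_{w,\ell}=\{u\}$. The singleton bag $\{v\}$ is trivially a cograph, the singleton $\{u\}$ is trivially a module of $G[B_w]$, and it is either disjoint from or contained in every preexisting $M_{w,c}$ according to whether $u\in M_{w,c}$ or not. No other bag or module needs to be touched.

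If $v$ is a true or false twin of some $u$, I would add $v$ to the bag $B_w$ containing $u$ and, for every child $c$ of $w$ with $u\in M_{w,c}$, replace $M_{w,c}$ by $M_{w,c}\cup\{v\}$. The cograph property of $G[B_w\cup\{v\}]$ follows from the classical fact that adding a twin to a cograph yields a cograph. For the arc into $w$ from its parent $p$, the vertex $v$ has, by definition of twin, the same external neighborhood as $u$, so $v$ is connected to $B_p$ exactly through $M_{p,w}$. For each child $c$, $v$ is adjacent to all of $B_c$ iff $u\in M_{w,c}$, so the updated module is the correct set of neighbors of $B_c$; moreover $M_{w,c}\cup\{v\}$ remains a module of $G[B_w\cup\{v\}]$ because every vertex of $B_w\setminus M_{w,c}$ sees $u$ and $v$ identically, and the module property at $u$ was already known.

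The main obstacle I expect is verifying that the nesting/disjointness of the children's modules is preserved after the twin update. This should yield to a short case analysis: if $u$ lies in two modules $M_{w,c_1},M_{w,c_2}$, they cannot be disjoint, so they were already nested and the nesting survives adding $v$ to both; if $u$ lies in only one of them, the other is unchanged and the original inclusion or disjointness is still valid with respect to the enlarged module. Every such verification collapses to a statement about $u$ in $G\setminus v$, because true and false twins share, by definition, the same adjacencies to every other vertex, so no genuinely new interaction with the tree structure is introduced.
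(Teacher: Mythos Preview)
Your proposal is correct and follows essentially the same construction as the paper: induct along a construction order, placing a new twin into the bag of its sibling and creating a fresh singleton leaf for a new pendant vertex. The paper's proof is terser and leaves the module and nesting verifications implicit, whereas you spell these out; your added detail is sound and the two arguments coincide.
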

\begin{proof}
This tree can be inductively built out of the construction order as follows. When the graph contains only one vertex, the tree contains only one node, which corresponds to the only vertex. At each step, if the new vertex is the twin of some vertex $u$, then it is assigned to the node where $u$ belongs, and if it is a pendant vertex of $u$, a new node is created that contains only the new vertex, and is incorporated in the tree as the son of the node where $u$ belongs. Since a node is build via operations of true and false twins, each node is a cograph.
\end{proof}

\begin{remark}\label{rem:tree2dh}
Reciprocally, every graph for which such a tree exists is a distance-hereditary graph. 
\end{remark}
\begin{proof}[Sketch of the proof]
Let us exhibit the complement of a construction order. If a leaf contains at least two vertices, then delete one of them using false or true twin operations. Consider a leaf $x$ of the internal tree and let $M$ be a minimum module (i.e. no other son of $x$ has its module strictly included in $M$). If $M$ contains at least two vertices, one can delete one of them using twins operations. If $M$ is a single vertex, then the vertex of the leaf is a pending vertex.
\end{proof}

For any distance-hereditary graph $G$, a \emph{modular coloring} is a coloring of $G$ such that for every node $u$ of $T(G)$, the cograph $G[B_u]$ is modularly colored.

\begin{proof}[Proof of Theorem~\ref{thm:distancehereditary}.]

Let $G$ be a connected distance-hereditary graph, and let $k \geq \chi(G)+1$.
We prove Theorem~\ref{thm:distancehereditary} in two steps, as in the case of Theorem~\ref{thm:tw}. We first introduce a few definitions.

Given a node $u$ of $T(G)$, we denote by $H_u(G)$ the graph induced in $G$ by the union of $B_v$'s such that $v$ is a node of the subtree of $T(G)$ rooted in $u$.
Given a node $u$ of $T(G)$, we say that $u$ is a \emph{quasi-leaf} if $H_u(G)$ is a quasi-cograph. Note that a node $u$ of $T(G)$ is a quasi-leaf if $u$ is a leaf, or if $u$ is a node whose every son is either a leaf whose bag induces a clique or a node $v$ with a single son $w$ such that both $B_v$ and $B_w$ induce a clique (in this case the clique $B_v \cup B_w$ is partially attached to $H_u(G)$).
A \emph{maximal} quasi-leaf in $T(G)$ is a quasi-leaf $u$ that is the son of no quasi-leaf, and whose bag does not induce a clique or that is neither a leaf nor a node with a single son whose bag induces a clique.

Let $\ell(G)$ be the number of nodes in $T(G)$ that are neither leaves nor non-maximal quasi-leaves.

\begin{lemma}\label{lem:2modular}
For any $k$-coloring $\alpha$ of $G$, there exists a modular coloring $\gamma_\alpha$ of $G$ such that $d(\alpha,\gamma_\alpha) \leq (4 \cdot k \cdot \chi(G)+2) \cdot n^2$. In particular, $\gamma_\alpha$ uses exactly $\chi(G)$ colors on $G$. 
\end{lemma}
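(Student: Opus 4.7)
The plan is to proceed by induction on $\ell(G)$. In the base case $\ell(G)=0$, every node of $T(G)$ is a leaf or a non-maximal quasi-leaf; in particular the root of $T(G)$ is a quasi-leaf, so $G$ itself is a quasi-cograph rooted in some cograph, and a single application of Lemma~\ref{lem:quasioptimal} to $\alpha$ will produce a modular coloring $\gamma_\alpha$ of $G$ in at most $4k\chi(G)\cdot n$ recolorings, well within the claimed bound.

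For the inductive step, I would fix a maximal quasi-leaf $u$ of $T(G)$ with parent $v$, and set $S_u=V(H_u(G))\setminus B_u$, the ``pending'' part attached to $B_u$ along the module $M_{v,u}\subseteq B_v$. The recoloring would proceed in three phases. First, Lemma~\ref{lem:quasioptimal} applied to the quasi-cograph $H_u(G)$ would produce a modular coloring of $H_u(G)$ in at most $4k\chi(G)\cdot|V(H_u(G))|$ recolorings, introducing no new color on $B_u$. Next, I would set $G'=G\setminus S_u$: the node $u$ becomes a leaf of $T(G')$, so $\ell(G')\le\ell(G)-1$, and the induction hypothesis would yield a modular coloring of $G'$ in at most $(4k\chi(G)+2)\cdot(n-|S_u|)^2$ recolorings. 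Finally, after this recursive call, $G[B_u]$ is modularly colored (as a bag of the modularly colored $G'$), so $H_u(G)$ is once again a quasi-cograph rooted in a modularly colored cograph; re-applying Lemma~\ref{lem:quasioptimal} to $H_u(G)$ (which now only has to touch $S_u$, since the underlying cograph is already modularly colored), or combining it with Lemma~\ref{lem:quasicographmodular}, would restore a modular coloring of $H_u(G)$ in at most $O(k\chi(G)\cdot|S_u|)$ further recolorings.

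The hard part will be lifting the recursive recoloring sequence of the second phase to $G$: a recoloring of a vertex of $B_u\subset V(G')$ can create a conflict with a vertex of $S_u$ currently carrying the same new color, since $B_u$ is joined to the pending cliques of $S_u$. The crucial structural observation is that all vertices of $B_u$ share the same external neighborhood $M_{v,u}$ in $V(G)\setminus V(H_u(G))$, so the recursion on $G'$ never interacts with $S_u$ directly; each conflict along the lifted sequence can be resolved locally by first evicting the offending $S_u$-vertex to the spare color guaranteed by $k\ge\chi(G)+1$, paying only an amortized constant cost per original recoloring.

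Adding the three contributions and using $(n-|S_u|)^2+O(n\cdot|S_u|)\le n^2$, the total is at most $(4k\chi(G)+2)\cdot n^2$, as required, and the resulting $\gamma_\alpha$, being a modular coloring of $G$, uses exactly $\chi(G)$ colors.
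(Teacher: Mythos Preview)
Your overall architecture---reduce to a maximal quasi-leaf $u$, handle $H_u(G)$ locally with Lemma~\ref{lem:quasioptimal}, then recurse---is reasonable, but it diverges from the paper's proof in a way that creates a real problem you have not resolved.

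The paper does \emph{not} delete $S_u$ and recurse on $G\setminus S_u$. Instead, after applying Lemma~\ref{lem:quasioptimal} to $H_u(G_i)$, it \emph{merges} the colour classes of $B_u$ (and then, with two more recolourings per vertex, also merges each pending clique into $B_u$), producing a genuinely smaller distance-hereditary graph $G_{i+1}$ in which $u$ has become a non-maximal quasi-leaf. By Remark~\ref{rem:merged}, any recolouring sequence on $G_{i+1}$ lifts to $G_i$ for free, so there is no ``lifting'' phase at all. The bound then follows because each of the at most $n$ iterations recolours every vertex at most $4k\chi(G)+2$ times.

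Your Phase~2, by contrast, runs the full recursion on $G'=G\setminus S_u$ while $S_u$ sits untouched and attached to $B_u$. The gap is precisely where you say ``the hard part will be lifting''. Two things go wrong. First, a single recolouring of some $x\in B_u$ to colour $c$ can conflict with one vertex of colour $c$ in \emph{each} clique $D_i$ attached to a module $C_i\ni x$; there can be many such cliques, so the cost is not constant per original step. Second, and more seriously, evicting such a $y\in D_i$ need not be possible: $N(y)=(D_i\setminus y)\cup C_i$, and during the recursion the colouring of $C_i\subseteq B_u$ is completely uncontrolled, so $N(y)$ may already use all $k-1$ colours other than $c$. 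The inequality $k\ge\chi(G)+1$ only bounds the clique number, not $|D_i|+|C_i|$, so the ``spare colour'' you invoke is not guaranteed to exist. The merging trick in the paper is exactly what sidesteps both issues.
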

\begin{proof}
Let $G_0$ be $G$. If $\ell(G)=0$ then $G$ is a quasi-cograph and we apply Lemma~\ref{lem:quasioptimal} then recolor if needed the cliques that are partially joined to the cograph in order to use exactly $\chi(G)$ colors on $G$. 

At each step $i$ and until $\ell(G_i)=0$, we merge some vertices in $G_i$ to obtain a distance-hereditary graph $G_{i+1}$ such that $\ell(G_{i+1})<\ell(G_i)$ and a modular coloring of $G_{i+1}$ yields a modular coloring of $G_i$. We simultaneously build a coloring $c_i$ of $G_i$ such that $c_{i+1}$ is obtained from $c_i$ by recoloring each vertex at most $(4 k \cdot \chi(G)+2)$ times. We proceed as follows.

Since $\ell(G_i)>0$, the oriented tree $T(G_i)$ contains a node $u$ that is a maximal quasi-leaf of $T(G_i)$. The graph $H_u(G_i)$ is a quasi-cograph. We apply Lemma~\ref{lem:quasioptimal} to $H_u(G_i)$ in order to obtain a modular coloring of $G_i[B_u]$ without recoloring any vertex outside $H_u(G_i)$ and by recoloring each vertex at most $4 k \chi(G)$ times (and by never recoloring a vertex of the bag of $u$ with a color which is not initially in $c_i(B_u)$). We merge the vertices in $B_u$ according to their color classes. Note that at this point, the set $B_u$ and then in particular every module of $H_u$ is a clique, so any recoloring of $H_u$ is still a modular coloring.

\begin{figure}
 \centering
 \includegraphics[scale=0.8]{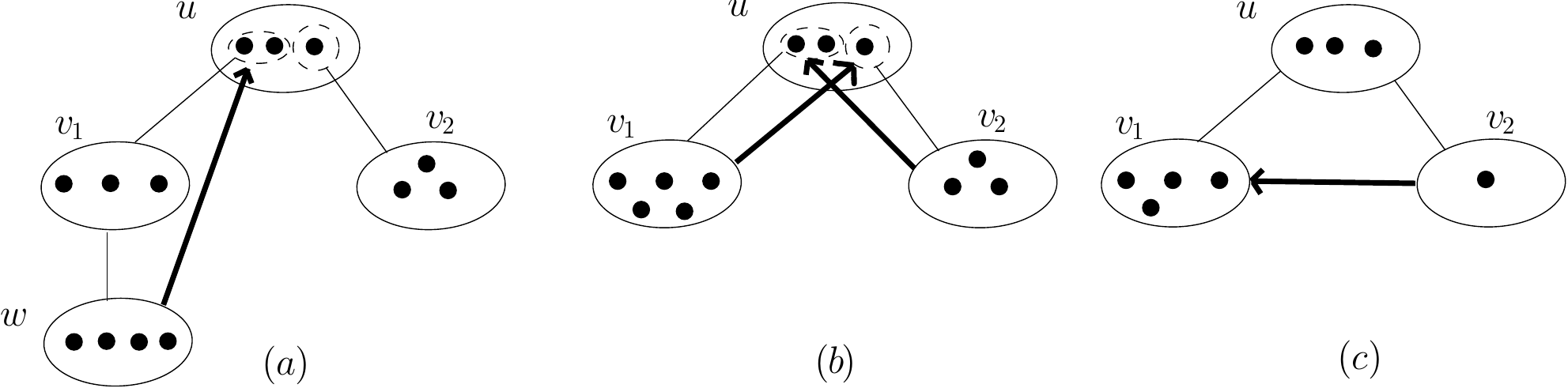}
 \caption{Three steps of the recoloring of $H_u(G_i)$. The dotted parts of $u$ correspond to modules on which $v_1$ and $v_2$ are attached. Arrows illustrate the recoloring operations done at each step.}
 \label{fig:dh}
\end{figure}

This proof is illustrated in Figure~\ref{fig:dh}. Let us show that, by recoloring each vertex at most twice, we can assume that the node $u$ has at most one son, and that this son is a clique which is joined to $B_u$. Let $v$ be a son of $u$. By assumption, $v$ has at most one son $w$. First recolor, as long as possible, some vertices of $B_w$ (if they exist) with one color of the clique $K$ (recall that $B_w$ is not adjacent to $B_u$), see Figure~\ref{fig:dh}(a). Once the set of colors used on $B_W$ and the set of colors used on $K$ are included one in the other, we can merge the vertices of $B_w$ with the corresponding vertex in $K$. Note that the graph is still a quasi-cograph where the remaining vertices of $B_w$, if any, can be put in $B_v$ (due to the merging, the vertices of $B_w$ are now adjacent to all the vertices of $K$). Thus we can consider that $B_v$ has no son, see Figure~\ref{fig:dh}(b). Next, we recolor the vertices of $B_v$ with colors of $B_u$ as long as it is possible. Once again we merge the vertices of $B_v$ and $B_u$ that have the same color, and we obtain a clique on which other cliques have been attached, see Figure~\ref{fig:dh}(c). We repeat this operation for every son $v_i$ of $u$. Until now, every vertex is recolored at most once. By recoloring every vertex of the $B_{v_i}$'s at most once, we can assume that the set of colors used on each clique attached on $B_u$ is a subset of the largest one. Then we can again merge the vertices of the same color in order to be sure that only one clique is attached on $B_u$. Throughout this process, every vertex has been recolored at most twice.

Let $G_{i+1}$ be the resulting distance-hereditary graph, and $c_{i+1}$ be the resulting coloring of $G_{i+1}$. Thus, after this step, $B_u$ is a clique and there is at most one leaf (which induces a clique) attached to $u$ and such a leaf, if it exists is a clique. So $u$ is a non-maximal quasi-leaf of $T(G_{i+1})$, so $\ell(G_{i+1})<\ell(G_i)$. Besides, $c_{i+1}$ is obtained from $c_i$ by recoloring each vertex at most $4 k \cdot \chi(G)+2$ times.

Since the number of vertices decreases at each step, we reach a step $j$ such that $\ell(G_j)=0$ in at most $n$ steps. The graph $G_j$ is a clique, whose coloring cannot be anything but modular, hence the result.
\end{proof}

\begin{lemma}\label{lem:modular2}
For any two modular colorings $\gamma_1$ and $\gamma_2$ of $G$ with $|\gamma_2(G)|=\chi(G)$, it holds that $d(\gamma_1,\gamma_2) \leq (6 \cdot \chi(G)+4)\cdot n^2$. 
\end{lemma}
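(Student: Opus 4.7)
The plan is to closely mimic the structure of the proof of Lemma~\ref{lem:2modular}, replacing each invocation of Lemma~\ref{lem:quasioptimal} by an invocation of Lemma~\ref{lem:quasicographmodular} so that we align $c_i$ directly with the target $\gamma_2$ (up to translation) at each iteration, instead of with an arbitrary modular coloring. Set $G_0 = G$ and $c_0 = \gamma_1$, and process $T(G)$ bottom-up.

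While $\ell(G_i) > 0$, pick a maximal quasi-leaf $u$ of $T(G_i)$, so that $H_u(G_i)$ is a quasi-cograph rooted in $G_i[B_u]$ and both $c_i|_{H_u(G_i)}$ and $\gamma_2|_{H_u(G_i)}$ are modular colorings. Since $k \geq \chi(G)+1$, some color $a$ is not used by $c_i$ on $B_u$; apply Lemma~\ref{lem:quasicographmodular} with this choice of $a$ to transform $c_i|_{H_u(G_i)}$ into a modular coloring $c'$ that agrees with $\gamma_2|_{H_u(G_i)}$ up to translation, at a cost of at most $6\chi(G)$ recolorings per vertex. This guarantees that $c'|_{B_u}$ and $\gamma_2|_{B_u}$ induce the same partition of $B_u$ into color classes, so one can merge $B_u$ accordingly and $\gamma_2$ remains a well-defined coloring of the merged graph $G_{i+1}$. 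We then execute the ``clique-flattening'' step of Lemma~\ref{lem:2modular}---recoloring the sons of $u$ into the colors of $B_u$---at the cost of at most $2$ further recolorings per vertex, so that $u$ becomes a non-maximal quasi-leaf of $T(G_{i+1})$ and $\ell(G_{i+1}) < \ell(G_i)$.

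After at most $n$ iterations, $\ell(G_f) = 0$, $G_f$ is a quasi-cograph, and the current coloring agrees with $\gamma_2$ up to translation on $V(G)$. Because $|\gamma_2(G)| = \chi(G)$, this forces the current coloring to use exactly $\chi(G)$ colors and to differ from $\gamma_2$ only by an order-preserving permutation of the color set. Since $k \geq \chi(G)+1$, a free color is always available to perform this final renaming cycle by cycle, in at most $2$ recolorings per vertex. Summing up, each of the $n$ iterations contributes at most $6\chi(G)+2$ recolorings per involved vertex and the final renaming adds at most $2$ per vertex, giving an overall bound of at most $(6\chi(G)+4)\cdot n^2$ recolorings.

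The main obstacle is verifying that the merging of $B_u$ is well-defined for $\gamma_2$, i.e., that the vertices identified during the merge are colored identically by $\gamma_2$; this is precisely what the ``up to translation'' conclusion of Lemma~\ref{lem:quasicographmodular} delivers. The remaining bookkeeping---the decrease of $\ell(G_i)$, the structure of $T(G_{i+1})$, and the handling of the cliques partially attached to $B_u$---follows exactly the same pattern as in Lemma~\ref{lem:2modular}.
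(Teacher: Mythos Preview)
Your overall strategy matches the paper's: iterate bottom-up over maximal quasi-leaves, invoke Lemma~\ref{lem:quasicographmodular} at each, merge, and decrease $\ell$. There are, however, two genuine gaps in the details.

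First, your choice of the auxiliary color $a$ is too weak. You take $a \notin c_i(B_u)$, but Lemma~\ref{lem:quasicographmodular} may temporarily put color $a$ on vertices of $B_u$, and every vertex of $B_u$ is joined to the module $I = M_{w,u}$ of the parent bag $B_w$. If $a \in c_i(I)$, this yields an improper coloring of $G$. The paper instead chooses $a \notin c_i(I \cup B_u)$, and argues that such an $a$ exists precisely because the invariant ``$c_i$ is modular'' is maintained, which forces $|c_i(I \cup B_u)| \le \chi(G) < k$. You never verify that your $c_i$ stays modular, nor do you account for $I$.

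Second, importing the clique-flattening step verbatim from Lemma~\ref{lem:2modular} breaks the compatibility with $\gamma_2$ that you rely on. That step first \emph{recolors} vertices of the pending cliques into colors of $B_u$ in an arbitrary way and \emph{then} merges according to the resulting color classes; after this recoloring the partition no longer coincides with that of $\gamma_2$, so two vertices with distinct $\gamma_2$-colors can be identified and $\gamma_2$ is no longer a well-defined coloring of $G_{i+1}$. The paper sidesteps this by merging the whole of $H_u(G_i)$ (not just $B_u$) directly according to the color classes produced by Lemma~\ref{lem:quasicographmodular}; since that coloring agrees with $\gamma_2$ up to translation on \emph{all} of $H_u(G_i)$, the merge is automatically $\gamma_2$-compatible and no separate flattening step is required.
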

\begin{proof}
The proof of this lemma is rather similar to that of Lemma~\ref{lem:2modular}.

Let $G_0$ be $G$. If $\ell(G)=0$ then Lemma~\ref{lem:modular2} holds by Lemma~\ref{lem:quasicographmodular}. At each step $i$ and until $\ell(G_i)=0$, we merge some vertices in $G_i$ to obtain a distance-hereditary graph $G_{i+1}$ such that $\ell(G_{i+1})<\ell(G_i)$ and a modular coloring of $G_{i+1}$ that agrees up to translation with $\gamma_2$ yields a modular coloring of $G_i$ that agrees up to translation with $\gamma_2$. We simultaneously build a modular coloring $c_i$ of $G_i$ such that $c_{i+1}$ is obtained from $c_i$ by recoloring each vertex at most $(4 k \cdot \chi(G)+2)$ times. We proceed as follows.

Since $\ell(G_i)>0$, the oriented tree $T(G_i)$ contains a node $u$ that is a maximal quasi-leaf of $T(G_i)$. The graph $H_u(G_i)$ is a quasi-cograph. Since $c_i$ is a modular coloring, we know that for $I$ the neighborhood of $H_u(G_i)$ in $G\setminus (H_u(G_i))$, the number of colors used by $c_i$ on $I \cup B_u$ is at most $\chi(G)$, thus there is a color $a$ that is not used on $I \cup B_u$. We apply Lemma~\ref{lem:quasicographmodular} to $H_u(G_i)$ with color $a$ in order to obtain a modular coloring of $G_i[B_u]$ without recoloring any vertex outside $H_u(G_i)$ and by recoloring each vertex at most $4 k \chi(G)$ times (and by never recoloring a vertex of the bag of $u$ with a color which is not in $c_i(B_u) \cup \{a\}$). We merge the vertices in $H_u(G_i)$ according to their color classes (note that the modular coloring obtained by Lemma~\ref{lem:quasicographmodular} agrees up to translation with $\gamma_2$ on $H_u(G_i)$ thus uses at most $\chi(G)$ colors). Thus we obtain a coloring $c_{i+1}$ that allows us to merge $G_i$ into a distance-hereditary graph $G_{i+1}$ where $u$ has at most one son $v$ in $T(G_{i+1})$, and the bag of $v$ is a clique.

Let $G_{i+1}$ be the resulting distance-hereditary graph, and $c_{i+1}$ be the resulting coloring of $G_{i+1}$. Note that, after this step, $u$ is a leaf of $T(G_{i+1})$, so $\ell(G_{i+1})<\ell(G_i)$. Besides, $c_{i+1}$ is obtained from $c_i$ by recoloring each vertex at most $4 k \cdot \chi(G)+2$ times.

Since the number of vertices decreases at each step, we reach a step $j$ such that $\ell(G_j)=0$ in at most $n$ steps. The graph $G_j$ is a clique, whose coloring cannot be anything but modular, hence the result.
\end{proof}
\end{proof}

\section{Further work}\label{sec:ccl}

Graphs of treewidth at most $k$ are $k$-degenerate graphs. The $(k+2)$-recoloring diameter of $k$-degenerate graphs at most $2^n$~\cite{Cereceda}. Note that the bound on the number of colors is optimal since $K_n$ is $(n-1)$-degenerate. Does the class of $k$-degenerate graphs have a polynomial $(k+2)$-recoloring diameter? Or, a weaker question, can we obtain a polynomial recoloring diameter when the number of color increases?

This question seems very challenging. The class of $k$-degenerate graphs also contains some sub-classes that are themselves interesting. One of the most famous is the class of planar graphs (which are $5$-degenerate).
\begin{conjecture}
For any planar graph $G$ and any integer $k$, if $k \geq 7$ then $R_k(G)$ has a polynomial diameter.
\end{conjecture}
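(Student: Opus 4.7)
The plan is to generalize the tree-decomposition framework of Theorem~\ref{thm:tw} to planar graphs. While planar graphs do not have bounded treewidth, they admit useful hierarchical decompositions. My first step would be to apply the Lipton--Tarjan planar separator theorem recursively in order to obtain a balanced separator hierarchy of depth $O(\log n)$, in which each separator has size $O(\sqrt{n})$.

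Second, I would try to define an analogue of the $V$-coherent colorings used for bounded-treewidth graphs, adapted to this separator hierarchy. Since planar graphs are $5$-degenerate and (by the four-color theorem) $4$-colorable, with $k \geq 7$ we always have at least $2$ spare colors at each vertex during any partial processing. The hope is that these spare colors provide enough flexibility to reconcile colorings across a separator, analogously to how the $(tw+2)$-nd color is used as a buffer in Lemma~\ref{pfclaim1}. I would then aim to show, by induction on the depth of the separator hierarchy, that any $k$-coloring can be transformed into a ``canonical'' coloring, and that any two canonical colorings differ by at most a polynomial number of recolorings. If each level contributes $\mathrm{poly}(|S|) \cdot n$ recolorings and there are $O(\log n)$ levels, the total would still be polynomial.

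The main obstacle is that, unlike in Theorem~\ref{thm:tw} where bags have constant size $tw+1$ and a rigid ``family'' structure essentially determines the canonical colorings up to permutation, the separators arising here have size $\Theta(\sqrt{n})$ and admit superpolynomially many distinct colorings. Concretely, Lemma~\ref{pfclaim1} relies on a single spare color and a bag of constant size to ``push'' an unwanted color out of a subtree at linear cost; for $\sqrt{n}$-sized separators the naive lift of this argument appears to give only a bound of the form $n^{O(\log n)}$ rather than a genuine polynomial. Overcoming this would likely require a new structural idea exploiting planarity, such as sphere-cut or branch decompositions with nicer boundary behavior, or a finer use of the fact that the chromatic gap $k - \chi(G) \geq 3$ is a constant rather than scaling with the decomposition width. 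Alternatively, one might try to bypass the separator approach entirely and work directly with a degeneracy ordering supplemented by the extra colors, though controlling interactions between far-apart recolorings along the ordering seems similarly delicate.
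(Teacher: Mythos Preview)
The statement you are attempting to prove is a \emph{conjecture}, not a theorem: the paper does not prove it and offers no proof to compare against. It is listed in Section~\ref{sec:ccl} as open, with the remark that the bound $7$ would be optimal because of the frozen planar colorings in Figures~\ref{fig:mix5} and~\ref{fig:mix6}. So there is no ``paper's own proof'' here, and any correct argument you produce would be a new result.

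As for the proposal itself, you have already identified the genuine gap. The machinery of Theorem~\ref{thm:tw} works because bags have size $tw(G)+1$, so a single spare color lets Lemma~\ref{pfclaim1} clear a color from an entire subtree at linear cost, and the family partition gives a canonical target coloring determined up to a permutation of $tw(G)+1$ colors. With Lipton--Tarjan separators of size $\Theta(\sqrt{n})$, neither ingredient survives: there is no bounded-size ``bag'' on which to anchor a canonical coloring, and pushing colors across a $\sqrt{n}$-sized boundary does not obviously cost $O(n)$ per level. Your own accounting shows the recursion yields at best $n^{O(\log n)}$, which is not polynomial. The alternative ideas you mention (branch decompositions, exploiting $k-\chi(G)\geq 3$, degeneracy orderings) are reasonable directions to explore, but none of them is developed here into an actual argument. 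What you have written is a research plan with a correctly diagnosed obstruction, not a proof.
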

This bound of $7$ would be optimal since there are planar graphs that are not $5$-mixing (see Figure~\ref{fig:mix5}) or not $6$-mixing (see Figure~\ref{fig:mix6}) (note that such examples also appear in Figure 2.2 of~\cite{Cereceda}).

\begin{figure}[!h]
\centering
\parbox[b]{3in}{
\centering
 \includegraphics[scale=0.7,bb=0 0 100 100]{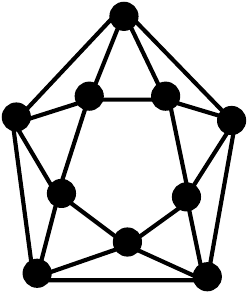}
 \caption{A planar graph that is not 5-mixing.}
 \label{fig:mix5}
}
\qquad
\parbox[b]{3in}{
\centering
 \includegraphics[scale=0.7,bb=0 0 100 100]{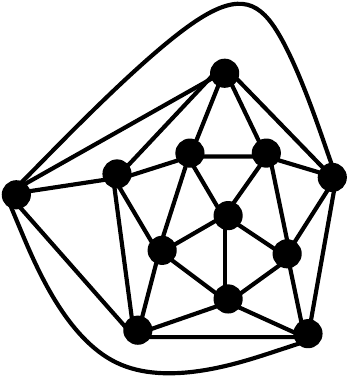}
 \caption{A planar graph that is not 6-mixing.}
 \label{fig:mix6}
 }
\end{figure}
Note that outerplanars graphs have a quadratic recoloring diameter since they have treewidth at most $2$. The quadratic lower bound is optimal~\cite{BonamyJ12} (see Figure~\ref{fig:outerplanar}).

\begin{figure}[!h]
\centering
 \includegraphics[bb=0 0 100 100]{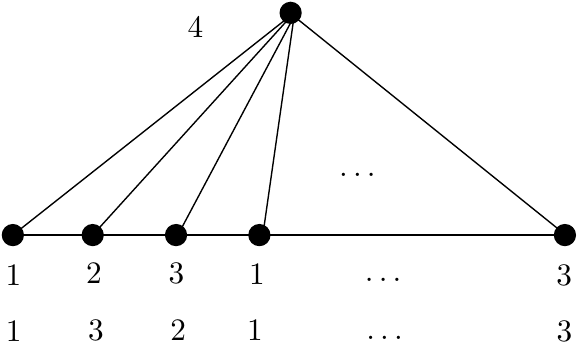}
 \caption{A $3$-colorable outerplanar graph which has a quadratic recoloring diameter.}
 \label{fig:outerplanar}
\end{figure}

Another interesting point is the existence of a hamiltonian cycle in the recoloring graph. In other words, is it possible to find a sequence of distinct recolorings which contains all the propers colorings and such that the consecutive colorings are adjacent? Consider for instance $2$-colorings of stable sets on $n$ vertices. The corresponding graph is the $n$-dimensional hypercube. Such graphs admit hamiltonian cycles, known as Gray codes. Gray codes, and their generalization, were extensively studied (see~\cite{Savage96} for a survey). Could there be some similar structure to be observed in other cases?

\bibliographystyle{plain}
\bibliography{biblitw}

\begin{thebibliography}{10}

\bibitem{brandstadt99}
Jeremy P.~Spinrad Andreas~Brandstadt, Van Bang~Le.
\newblock {\em Graph Classes}.
\newblock 1999.

\bibitem{ArnborgCP87}
S.~Arnborg, D.~Corneil, and A.~Proskurowski.
\newblock Complexity of finding embeddings in a k-tree.
\newblock {\em SIAM Journal on Algebraic Discrete Methods}, 8(2):277--284,
  1987.

\bibitem{Beyer82}
T.~Beyer, S.~M. Hedetniemi, and S.~T. Hedetniemi.
\newblock A linear algorithm for the grundy number of a tree.
\newblock In {\em Proceedings of the thirteenth southeastern conferenceon
  combinatorics, graph theory and computing}, 1982.

\bibitem{Bodlaender93}
Hans~L Bodlaender.
\newblock A linear time algorithm for finding tree-decompositions of small
  treewidth.
\newblock {\em Proceedings of the twentyfifth annual ACM symposium on Theory of
  computing}, 25(6):226--234, 1993.

\bibitem{BonamyB13}
M.~Bonamy and N.~Bousquet.
\newblock Recoloring bounded treewidth graphs.
\newblock In {\em to appear in LAGOS'13}, 2013.

\bibitem{BonamyJ12}
M.~Bonamy, M.~Johnson, I.~Lignos, V.~Patel, and D.~Paulusma.
\newblock Reconfiguration graphs for vertex colourings of chordal and chordal
  bipartite graphs.
\newblock {\em Journal of Combinatorial Optimization}, pages 1--12, 2012.

\bibitem{BonsmaC07}
P.~Bonsma and L.~Cereceda.
\newblock Finding paths between graph colourings: {PSPACE}-completeness and
  superpolynomial distances.
\newblock In {\em MFCS}, volume 4708 of {\em Lecture Notes in Computer
  Science}, pages 738--749, 2007.

\bibitem{Cereceda}
L.~Cereceda.
\newblock {\em Mixing Graph Colourings}.
\newblock PhD thesis, London School of Economics and Political Science, 2007.

\bibitem{Cereceda09}
L.~Cereceda, J.~van~den Heuvel, and M.~Johnson.
\newblock Mixing 3-colourings in bipartite graphs.
\newblock {\em Eur. J. Comb.}, 30(7):1593--1606, 2009.

\bibitem{CerecedaHJ11}
L.~Cereceda, J.~van~den Heuvel, and M.~Johnson.
\newblock Finding paths between 3-colorings.
\newblock {\em Journal of Graph Theory}, 67(1):69--82, 2011.

\bibitem{Christen79}
C.~Christen and S.~Selkow.
\newblock Some perfect coloring properties of graphs.
\newblock {\em Journal of Combinatorial Theory, Series B}, 27(1):49 -- 59,
  1979.

\bibitem{Diestel}
R.~Diestel.
\newblock {\em Graph Theory}, volume 173 of {\em Graduate Texts in
  Mathematics}.
\newblock Springer-Verlag, Heidelberg, third edition, 2005.

\bibitem{Gopalan09}
P.~Gopalan, P.~Kolaitis, E.~Maneva, and C.~Papadimitriou.
\newblock The connectivity of boolean satisfiability: Computational and
  structural dichotomies.
\newblock {\em SIAM J. Comput.}, pages 2330--2355, 2009.

\bibitem{Gyarfas88}
A.~Gyárfás and J.~Lehel.
\newblock Online and first-fit colorings of graphs.
\newblock {\em Journal of Graph Theory}, pages 217--227, 1988.

\bibitem{HammerM90}
P.~L. Hammer and F.~Maffray.
\newblock Completely separable graphs.
\newblock {\em Discrete Appl. Math.}, 27(1-2):85--99, 1990.

\bibitem{ItoD11}
T.~Ito, E.~Demaine, N.~Harvey, C.~Papadimitriou, M.~Sideri, R.~Uehara, and
  Y.~Uno.
\newblock On the complexity of reconfiguration problems.
\newblock {\em Theor. Comput. Sci.}, 412(12-14):1054--1065, 2011.

\bibitem{ItoD09}
T.~Ito, M.~Kamiński, and E.~Demaine.
\newblock Reconfiguration of list edge-colorings in a graph.
\newblock In {\em Alg. \& Data Struct.}, volume 5664 of {\em Lecture Notes in
  Computer Science}, pages 375--386. 2009.

\bibitem{Jerrum95}
M.~Jerrum.
\newblock A very simple algorithm for estimating the number of k-colorings of a
  low-degree graph.
\newblock {\em Random Structures \& Algorithms}, 7(2):157--165, 1995.

\bibitem{lerchs71}
H.~Lerchs.
\newblock On cliques and kernels.
\newblock Technical report, 1971.

\bibitem{Savage96}
Carla Savage.
\newblock A survey of combinatorial gray codes.
\newblock {\em SIAM Review}, 39:605--629, 1996.

\bibitem{Zaker06}
M.~Zaker.
\newblock Results on the grundy chromatic number of graphs.
\newblock {\em Discrete Mathematics}, 306(23):3166 -- 3173, 2006.

\end{thebibliography}

\end{document}